\documentclass[
  10pt,
  nofootinbib,
  superscriptaddress,
  onecolumn,
  amsfonts,
  amssymb,
  amsmath,
  aps,
  pra,
  floatfix
]{revtex4-2}
\usepackage{mathtools}
\usepackage{mathrsfs}
\usepackage{amsthm}
\usepackage{bm}
\usepackage{array}
\usepackage{dcolumn}
\usepackage{multirow}
\usepackage{booktabs}
\usepackage{enumitem}
\usepackage{graphicx}
\usepackage{tikz}
\usetikzlibrary{quantikz2}
\usepackage{algpseudocode}
\usepackage{csquotes}
\usepackage{comment}
\usepackage{xcolor}
\usepackage[normalem]{ulem}
\usepackage{orcidlink}
\usepackage{xurl}
\usepackage{hyperref}
\hypersetup{
  colorlinks=true,
  citecolor=cyan,
  urlcolor=cyan,
  linkcolor=cyan,
  breaklinks=true,
}
\newtheoremstyle{break}
  {} {} {\itshape} {} {\bfseries} {.} {\newline} {\thmname{#1}\thmnumber{ #2}\thmnote{ (\bfseries #3)}}
\theoremstyle{break}
\newtheorem{theorem}{Theorem}
\newtheorem{proposition}{Proposition}

\newtheorem{assumption}{Assumption}

\DeclareMathOperator{\Tr}{Tr}

\DeclareMathOperator{\Cov}{Cov}

\DeclareMathOperator{\diag}{diag}
\DeclareMathOperator{\Bdiag}{Bdiag}
\DeclareMathOperator*{\argmax}{arg\,max}

\let\oldd\d \renewcommand{\d}{\ifmmode\mathrm{d}\else\oldd\fi}
\let\oldi\i \renewcommand{\i}{\ifmmode\mathrm{i}\else\oldi\fi}
 
\let\Re\relax \DeclareMathOperator{\Re}{Re}
\let\Im\relax \DeclareMathOperator{\Im}{Im}
\let\ket\relax
\let\bra\relax

\let\ketbra\relax
\newcommand{\ket}[1]{| {#1} \rangle}
\newcommand{\bra}[1]{\langle {#1} |}

\newcommand{\ketbra}[2]{| {#1} \rangle\langle {#2} |}

\newcommand{\ip}[2]{\langle {#1} | {#2} \rangle}
\newcommand{\norm}[1]{\left\| {#1} \right\|}
\newcommand{\bftheta}{\bm{\theta}}
\newcommand{\Ex}{\mathbb{E}}

\newcommand{\veps}{\varepsilon}

\renewcommand{\i}{\mathrm{i}}
\newcommand{\dd}{\mathrm{d}}

\newcommand{\bbR}{\mathbb{R}}

\makeatletter
\@tfor\Letter:=ABCDEFGHIJKLMNOPQRSTUVWXYZ\do{%
  \expandafter\edef\csname rm\Letter\endcsname{\noexpand\mathrm{\Letter}}
  \expandafter\edef\csname bf\Letter\endcsname{\noexpand\mathbf{\Letter}}
  \expandafter\edef\csname sf\Letter\endcsname{\noexpand\mathsf{\Letter}}
  \expandafter\edef\csname cal\Letter\endcsname{\noexpand\mathcal{\Letter}}
  \expandafter\edef\csname scr\Letter\endcsname{\noexpand\mathscr{\Letter}}
}
\makeatother

\definecolor{lightblue}{rgb}{0.678, 0.847, 0.902}
\definecolor{lightgreen}{rgb}{0.565, 0.933, 0.565}
\definecolor{lightpink}{rgb}{1.000, 0.714, 0.757}

\begin{document}
\title{Quantum Federated Learning Based on Bures--Uhlmann Geometry\\ for Heterogeneous Noisy Clients}
\author{Haruki Emori\,\orcidlink{0009-0007-2264-9192}}
\email{emori.haruki.i8@elms.hokudai.ac.jp}
\affiliation{Graduate School of Information Science and Technology, Hokkaido University,
Kita 14, Nishi 9, Kita-ku, Sapporo, Hokkaido 060-0814, Japan}
\affiliation{RIKEN Center for Interdisciplinary Theoretical and Mathematical Sciences (iTHEMS),
RIKEN, 2-1 Hirosawa, Wako, Saitama 351-0198, Japan}
\author{Masaki Uchihara\,\orcidlink{0000-0002-3275-6625}}
\affiliation{Department of Diabetes, Endocrinology, and Metabolism,
National Center for Global Health and Medicine, Japan Institute for Health Security,
1-21-1 Toyama, Shinjuku-ku, Tokyo 162-8655, Japan}
\author{Yuuki Tokunaga\,\orcidlink{0000-0002-5167-4609}}
\affiliation{Graduate School of Information Science and Technology, Hokkaido University,
Kita 14, Nishi 9, Kita-ku, Sapporo, Hokkaido 060-0814, Japan}
\date{\today}

\begin{abstract}
Quantum federated learning enables collaborative model training across quantum devices without sharing raw data, and it faces the data and hardware heterogeneity inherent to noisy quantum devices.
Utilizing the quantum geometric tensor is a natural remedy, yet pure-state approaches and diagonal approximations discard the correlations that encode parameter incompatibility.
To address this, we extend the parameter-space geometry to the mixed states that noisy clients actually prepare.
The real part of the resulting mixed-state geometric tensor is the Bures metric, which measures how fast the physical state changes under parameter variation, and the imaginary part is the mean Uhlmann curvature, which quantifies the incompatibility of estimating multiple parameters simultaneously.
Accordingly, we employ the Bures metric as a local preconditioner and use the mean Uhlmann curvature to develop an achievable-precision aggregation rule that dynamically down-weights unreliable clients.
Furthermore, we establish theoretical guarantees by proving a convergence theorem and a variance-dominance proposition.
Empirical evaluations on a trapped-ion quantum emulator demonstrate that the proposed method maintains high accuracy across diverse device-heterogeneity conditions and outperforms standard federated averaging, whose accuracy degrades under strong noise.
\end{abstract}

\maketitle

\section{Introduction}
Machine learning increasingly relies on data that are private, regulated, or too large to centralize, such as clinical records held by separate hospitals, sensor logs on edge devices, or financial transactions inside different institutions.
Federated learning (FL) was introduced to train a shared model on such siloed data without moving the raw data off the owning device, in which each client optimizes a local model and shares only parameter updates with a coordinating server~\cite{McMahan17}.
This addresses societal needs for data privacy and regulatory compliance, and it inherits two structural difficulties.
First, client datasets are non-independent and non-identically distributed (non-IID), which slows or biases the global model.
Second, communication of model updates is costly and is itself a privacy attack surface.

Quantum federated learning (QFL) augments this picture with quantum resources.
Clients hold variational quantum circuits (VQCs)~\cite{Benedetti19,Cerezo21,Schuld21}, encode their classical data into quantum states, and train by measuring expectation values, and the server aggregates the circuit parameters~\cite{Chen21,Nguyen25,Sai25}.
The motivation is twofold.
On the modeling side, VQCs access high-dimensional Hilbert spaces and may represent correlations that are hard classically.
On the systems side, quantum communication primitives such as quantum key distribution can protect the shared updates with physical-layer security~\cite{Nguyen25}.
A comprehensive account of QFL architectures, communication schemes, optimization strategies, security mechanisms, and applications is given in the recent survey of Nguyen \emph{et al.}~\cite{Nguyen25}.
Prior QFL work has, for example, introduced Fisher-information--based aggregation for non-IID data (QFedFisher)~\cite{Bhatia25}, adaptive two-gain aggregation over noisy quantum channels~\cite{Nanayakkara25}, layerwise training for hardware-heterogeneous clients~\cite{Han25}, and the use of intrinsic hardware noise for differential privacy~\cite{Pokharel25}.

A difficulty specific to QFL, and largely absent from classical FL, is
\emph{device heterogeneity}.
In the noisy intermediate-scale quantum (NISQ) era each client runs on hardware with a different gate-error rate, coherence time, and number of measurement shots.
Two clients with identical data and different devices produce updates of different reliability, and naive averaging is influenced most by the worst devices.
Our numerical experiments make this concrete.
When the unreliable clients have depolarizing rates of $0.4$--$0.6$, federated averaging of the same ansatz reaches a round-averaged accuracy of $0.77$--$0.78$ on a ten-class digit task with a worst evaluated round as low as $0.67$, whereas the geometry-aware rule proposed here reaches $0.81$--$0.82$ with a worst round above $0.73$.

A geometry-aware treatment of the parameter space is a natural remedy.
The relevant geometric object is the quantum geometric tensor (QGT)
$Q_{ij}=g_{ij}-\tfrac{\i}{2}\omega_{ij}$~\cite{Provost80,Cheng10,Gao25,Liu25}, whose real part $g_{ij}$ is the Fubini--Study metric and whose imaginary part $\omega_{ij}$ is the Berry curvature.
A subtlety here motivates the present work.
For a \emph{pure} state the real part of the QGT equals one quarter of the quantum Fisher information matrix (QFIM), $F^{\mathrm Q}_{ij}=4g_{ij}$~\cite{Stokes20,Liu20}.
Any optimizer that preconditions the gradient with the real part of the \emph{pure-state} QGT is, by definition, the quantum natural gradient (QNG) of Stokes \emph{et al.}~\cite{Stokes20}, and transplanting it into a federated loop adds no geometric novelty on the local-optimization side.
The diagonal approximation commonly used to reduce the measurement cost of the metric discards the off-diagonal correlations that, as we show below, carry the quantum \emph{incompatibility} information once the state is mixed.
These two points limit any pure-state QGT approach to QFL.

We resolve both points by moving the geometry to the regime in which QFL clients \emph{actually} operate.
Because of hardware noise, a client prepares a mixed state $\rho_k(\bftheta)$ rather than a pure state $\ket{\psi(\bftheta)}$.
The geometry for a density matrix is the \emph{mixed-state} QGT, whose real part is the Bures metric, the minimal monotone metric on density matrices~\cite{Bures69,Uhlmann91,Petz96,Carollo20}, and whose imaginary part is the mean Uhlmann curvature~\cite{Uhlmann91,Carollo20}.
The Bures metric reduces to the Fubini--Study metric for pure states and to the classical Fisher matrix for commuting families, and it differs from both in the intermediate, $O(1)$-noise regime where QFL operates.
This is the open direction that Stokes \emph{et al.} flagged in their discussion as the candidate for noisy devices~\cite{Stokes20}, and we develop it in the QFL context.

\paragraph{Contributions.} (i) We identify the pure-state equivalence
QGT(real)$=$QFIM$=$QNG, and we extend QFL beyond it to the
mixed-state Bures--Uhlmann geometry.
(ii) We make the roles of the two parts precise.
The symmetric (Bures) part is a reparametrization-invariant local preconditioner, and the antisymmetric (Uhlmann) part enters, through the quantum Cram\'er--Rao bound~\cite{Helstrom69,Helstrom76,Holevo82,Paris09}, an \emph{achievable-precision} aggregation rule.
(iii) We replace the diagonal approximation by an exact block-diagonal
estimator with $O(L)$ overhead and bound its truncation error.
(iv) We derive the closed-form depolarizing attenuation $\kappa(p)$ and show that for non-global noise the correction is matrix-valued, and the full mixed-state geometry is needed.
(v) We prove a convergence theorem, and we prove in addition (Proposition~\ref{prop:dominance}) that the proposed aggregation never has a larger residual variance than federated averaging, with equality if and only if the devices are homogeneous.
The reduction of the aggregated variance is therefore a consequence of the analysis rather than a tuning artefact.
(vi) We evaluate on the full MNIST database (ten-class task, digits $0$--$9$), using the trapped-ion emulator \texttt{reimei-E} of the \texttt{reimei} quantum computer~\cite{Quantinuum} through Quantinuum Nexus under four device-heterogeneity conditions, averaging over three independently seeded federations.
The proposed method attains the highest round-averaged accuracy in every condition, and the size of its advantage over federated averaging is largest where the closed-form weight ratio predicts the strongest down-weighting and narrows toward parity where the devices are reliable.
The approach points toward robust, privacy-preserving collaborative learning on the heterogeneous quantum hardware of the emerging quantum internet.

\section{Preliminaries}

\subsection{Quantum federated learning}
\label{sec:qfl}
We first describe QFL quantitatively for readers familiar with quantum mechanics and not with FL.
A qubit state is a unit vector $\ket{\psi}=\alpha\ket{0}+\beta\ket{1}\in\mathbb{C}^2$ with $|\alpha|^2+|\beta|^2=1$, and an $n$-qubit register lives in $\mathcal{H}=(\mathbb{C}^2)^{\otimes n}$ of dimension $N=2^n$.
A variational quantum classifier maps a classical input $\mathbf{x}\in\bbR^{d_{\rm in}}$ to a parametrized state
\begin{equation}
\ket{\psi(\mathbf{x};\bftheta)}=U(\bftheta)\,U_{\rm Enc}(\mathbf{x})\ket{0}^{\otimes n},
\label{eq:vqc-state}
\end{equation}
where $U_{\rm Enc}(\mathbf{x})$ encodes the data by angle encoding, $R_y(x_j)=e^{-\i x_j Y/2}$, and $U(\bftheta)$ is a trainable unitary with real parameters $\bftheta\in\bbR^{d}$.
The model output is built from the expectation values of a fixed set of observables $\{P_f\}$,
\begin{equation}
\phi_f(\mathbf{x};\bftheta)=\Tr\!\big[P_f\,\ketbra{\psi(\mathbf{x};\bftheta)}{\psi(\mathbf{x};\bftheta)}\big],
\label{eq:vqc-output}
\end{equation}
read out by repeated measurement, with the set specified in Sec.~\ref{sec:readout}.
Gradients $\partial_i\phi_f$ are obtained on hardware by the parameter-shift rule~\cite{Mitarai18,Schuld19}.

In FL there are $K$ clients.
Client $k$ owns a private dataset $\mathcal{D}_k$ of size $n_k$ that \emph{never} leaves the device, and $\mathcal{D}=\bigsqcup_{k=1}^{K}\mathcal{D}_k$ with $|\mathcal{D}|=\sum_k n_k$.
With a per-sample loss $\ell$, here the softmax cross-entropy, the local objective is
\begin{equation}
f_k(\bftheta)=\frac{1}{n_k}\sum_{(\mathbf{x},y)\in\mathcal{D}_k}
\ell\big(y,\hat y(\mathbf{x};\bftheta)\big),
\label{eq:local-loss}
\end{equation}
and the federated goal is to minimize the global objective
\begin{equation}
F(\bftheta)=\sum_{k=1}^{K}p_k\,f_k(\bftheta),\qquad p_k=\frac{n_k}{|\mathcal{D}|}.
\label{eq:global-obj}
\end{equation}

Training proceeds in synchronous \emph{rounds}.
The server broadcasts the global parameters $\bftheta^{(t)}$, each client performs local optimization on $f_k$ and returns its updated parameters $\bftheta_k^{(t+1)}$, and the server aggregates them into $\bftheta^{(t+1)}$.
The canonical rule, federated averaging (QFedAvg), uses the data-size weights, $\bftheta^{(t+1)}=\sum_k p_k\,\bftheta_k^{(t+1)}$.
QFedFisher~\cite{Bhatia25} replaces $p_k$ by weights built from a diagonal classical Fisher information.
The central question of this paper is what aggregation weights are geometrically and statistically appropriate when clients are noisy and heterogeneous, and how the parameter-space geometry should shape the local update.

\subsection{Pure-state quantum geometric tensor}
\label{sec:pure}
We recall the QGT in a form that makes its meaning transparent, following the
geometric formulation of quantum mechanics~\cite{Cheng10,Provost80}.
Consider a smooth family of pure states $\ket{\psi(\bftheta)}$.
The naive distance obtained from $\ip{\partial_i\psi}{\partial_j\psi}$ is not gauge invariant, because a local phase change $\ket{\psi}\to e^{\i\alpha(\bftheta)}\ket{\psi}$, which leaves all physical observables unchanged, alters it.
Removing the phase (Berry) connection $A_i=\i\ip{\psi}{\partial_i\psi}\in\bbR$ yields the gauge-invariant quantum geometric tensor on the projective space of rays~\cite{Provost80},
\begin{equation}
Q^{\rm Pure}_{ij}(\bftheta)=\ip{\partial_i\psi}{\partial_j\psi}
-\ip{\partial_i\psi}{\psi}\ip{\psi}{\partial_j\psi}
=g_{ij}-\tfrac{\i}{2}\,\omega_{ij}.
\label{eq:qgt-pure}
\end{equation}
Because the Hilbert-space inner product is Hermitian, the real part $g_{ij}=\Re Q^{\rm Pure}_{ij}$ is symmetric and the imaginary part $\omega_{ij}=-2\,\Im Q^{\rm Pure}_{ij}$ is antisymmetric.
The real part is the
Fubini--Study metric~\cite{Provost80}, and it measures the rate at which the
physical state changes under parameter variation, because the fidelity obeys~\cite{Provost80}
\begin{equation}
|\ip{\psi(\bftheta)}{\psi(\bftheta+\dd\bftheta)}|^2
= 1-\sum_{ij}g_{ij}\,\dd\theta_i\,\dd\theta_j+O(\dd\theta^3).
\label{eq:fidelity-expand}
\end{equation}
The antisymmetric part $\omega_{ij}$ is the Berry curvature, the flux of the emergent gauge field $A_i$ associated with the geometric (Berry) phase~\cite{Berry84}.
Thus $g$ describes \emph{how far} the state moves, while $\omega$ describes the \emph{phase holonomy} accumulated, and one needs $g$ to endow the parameter manifold with a notion of length and steepest descent.

For learning, the relevance of $g$ is twofold.
It is the unique unitarily invariant metric on pure states, and it provides a parametrization-independent notion of the natural descent direction, and it equals one quarter of the pure-state quantum Fisher information matrix, $F^{\mathrm Q}_{ij}=4g_{ij}$~\cite{Liu20}, which links geometry to estimation precision.
The quantum natural gradient (QNG)~\cite{Stokes20,Amari98} uses this by preconditioning the Euclidean gradient with $g$,
\begin{equation}
\bftheta^{(t+1)}=\bftheta^{(t)}-\eta\,\big(g(\bftheta^{(t)})+\veps\openone\big)^{-1}\nabla\mathcal{L}(\bftheta^{(t)}),
\label{eq:qng}
\end{equation}
where $\mathcal{L}(\bftheta)$ is the differentiable training objective minimized locally, in the federated setting the client empirical risk $f_k(\bftheta)$ of Eq.~\eqref{eq:local-loss}, and $\nabla\mathcal{L}(\bftheta^{(t)})$ is its Euclidean gradient at the current iterate $\bftheta^{(t)}$.
Here $\eta>0$ is the learning rate and $\veps>0$ is a Tikhonov regularizer that guarantees the inverse $(g+\veps\openone)^{-1}$ exists even when $g$ is rank deficient.
Equation~\eqref{eq:qng} corrects for the curvature of the state manifold and mitigates barren plateaus~\cite{McClean18,Wang21}.

The observation for QFL is that Eq.~\eqref{eq:qng} \emph{is} QNG, and any method that preconditions with the real part of the pure-state QGT reproduces it exactly.
A pure-state QGT-based QFL therefore cannot claim novelty at the level of the local update unless the geometry itself is changed.
This is what the mixed-state extension below accomplishes.

\subsection{Bures metric and mean Uhlmann curvature for mixed states}
\label{sec:mixed}
On real hardware client $k$ realizes a mixed state produced by a noise channel $\mathcal{N}_k$ rather than the pure state \eqref{eq:vqc-state},
\begin{equation}
\rho_k(\bftheta)=\mathcal{N}_k\!\big(U(\bftheta)\ketbra{0}{0}U^\dagger(\bftheta)\big),
\qquad \Tr\rho_k^2<1.
\label{eq:rho}
\end{equation}
The geometric object appropriate to a density matrix is built from the symmetric
logarithmic derivative (SLD) $L_i$, defined by $\partial_i\rho_k=\tfrac12(L_i\rho_k+\rho_k L_i)$~\cite{Helstrom69,Helstrom76,Holevo82,Paris09}.
In the eigenbasis $\rho_k=\sum_n\lambda_n\ketbra{n}{n}$ the mixed-state QGT is
\begin{equation}
Q^{\rm Mix}_{k,ij}\coloneqq\tfrac14\Tr\!\big[\rho_k\,L_iL_j\big]
=g^{\rm Bur}_{k,ij}-\tfrac{\i}{2}\,\omega^{\rm Uhl}_{k,ij},
\label{eq:qgt-mix}
\end{equation}
whose real (Bures) and imaginary (mean Uhlmann curvature) parts have the explicit
forms~\cite{Carollo20}
\begin{align}
g^{\rm Bur}_{k,ij}&=\tfrac14 F^{\mathrm Q,\mathrm{SLD}}_{k,ij}
=\frac12\!\!\sum_{\lambda_m+\lambda_n>0}\!\!
\frac{\Re\!\big(\bra{m}\partial_i\rho_k\ket{n}\bra{n}\partial_j\rho_k\ket{m}\big)}
{\lambda_m+\lambda_n},
\label{eq:bures}\\
\omega^{\rm Uhl}_{k,ij}&=\tfrac{\i}{2}\Tr\!\big(\rho_k[L_i,L_j]\big)
=\!\!\sum_{\lambda_m+\lambda_n>0}\!\!
\frac{(\lambda_m-\lambda_n)\,\Im\!\big(\bra{m}\partial_i\rho_k\ket{n}\bra{n}\partial_j\rho_k\ket{m}\big)}
{(\lambda_m+\lambda_n)^2}.
\label{eq:uhl}
\end{align}
Equations~\eqref{eq:bures}--\eqref{eq:uhl} are the mixed-state generalizations of the real and imaginary parts of \eqref{eq:qgt-pure}.
The difference from the pure case is conceptual and quantitative, and it is summarized by three points.

\emph{(a) Distinctness from QFIM and classical Fisher.} By Petz's theorem~\cite{Petz96}, monotone metrics on density matrices are not unique, and the Bures metric is the \emph{minimal} one.
It reduces to the Fubini--Study metric in the pure limit $\rho_k\to\ketbra{\psi}{\psi}$ and to one quarter of the classical Fisher matrix in the commuting limit $[\rho_k,\partial_i\rho_k]=0$, and for an $O(1)$-noise state it coincides with \emph{neither}~\cite{Carollo20}.
QFL clients operate in this intermediate regime, and a natural choice of metric is $g^{\rm Bur}_k$ rather than the pure-state QGT used by QNG or the classical Fisher matrix used by QFedFisher~\cite{Bhatia25}.

\emph{(b) Quantum content in the imaginary part.} The mean Uhlmann
curvature \eqref{eq:uhl} vanishes for pure states and for commuting families and is nonzero whenever $[\rho_k,\partial_i\rho_k]\neq0$.
It measures the \emph{incompatibility} of estimating different parameters simultaneously and has no classical analogue~\cite{Carollo20}.
Discarding the off-diagonal elements by a diagonal approximation destroys this quantum information, which we avoid in Sec.~\ref{sec:method}.

\emph{(c) Separate roles of the two parts.} For a \emph{real} parameter update the descent direction solves $g\,\delta\bftheta=-\eta\nabla\mathcal{L}$, and any antisymmetric matrix $\omega$ contributes nothing because $\delta\bftheta^{\!\top}\omega\,\delta\bftheta=0$.
The imaginary part cannot, and must not, be inserted into the gradient step, which is why QNG uses \emph{only} the real part.
The imaginary part has a distinct statistical role.
When the SLDs do not commute, the attainable covariance of a multiparameter estimate exceeds the SLD quantum Cram\'er--Rao bound (the Holevo bound), and the excess is governed by
$\omega^{\rm Uhl}_k$~\cite{Carollo20,Paris09,Imai26PRL,Imai26ARX}.
We therefore use the Bures part as the local preconditioner and the Uhlmann part to quantify the \emph{reliability} of each client's reported update (Sec.~\ref{sec:agg}).
This division of labour uses both parts of the QGT and does not insert the imaginary part into the gradient step.

\section{Proposed method QFedQGT}
\label{sec:method}
The proposed scheme has two ingredients, a mixed-state natural-gradient local update that uses the Bures metric, and a server aggregation rule whose weights are the \emph{achievable precision} of each client.
Both are estimated on hardware by an exact block-diagonal procedure, and the diagonal approximation is never used.

We first fix notation.
The noiseless (pure) QGT real part of the shared ansatz $\ket{\psi(\bftheta)}=U(\bftheta)\ket{0}^{\otimes n}$ is
\begin{equation}
\big[G^{\rm Pure}(\bftheta)\big]_{ij}
=\Re\!\Big[\ip{\partial_i\psi}{\partial_j\psi}-\ip{\partial_i\psi}{\psi}\ip{\psi}{\partial_j\psi}\Big],
\label{eq:Gpure}
\end{equation}
which is a real symmetric positive-semidefinite $d_{\rm q}\times d_{\rm q}$ matrix on the quantum parameters $\bftheta_{\rm q}$ (Sec.~\ref{sec:readout}).
Client $k$'s Bures metric $G_k=g^{\rm Bur}_k$ from \eqref{eq:bures} reduces to a noise attenuation of \eqref{eq:Gpure} for the noise model treated below.

\subsection{Local update and its reparametrization invariance}
Client $k$ estimates the local gradient by the parameter-shift rule, obtaining an unbiased estimate $\widetilde\nabla f_k=\nabla f_k+\bm\xi_k$ with $\Ex[\bm\xi_k]=\mathbf 0$ and $\Cov(\bm\xi_k)=V_k$, and the block-diagonal Bures metric $G_k^{(t)}=g^{\rm Bur}_k(\bftheta_k^{(t)})$.
The local step is the mixed-state natural gradient
\begin{equation}
\bftheta_k^{(t+1)}=\bftheta_k^{(t)}-\eta_{\rm Loc}\,
\big(\widehat G_k^{(t)}+\veps\openone\big)^{-1}\widetilde\nabla f_k(\bftheta_k^{(t)}),
\label{eq:local}
\end{equation}
where $\widehat G_k=G_k/\bar g_k$ is normalized by its mean diagonal $\bar g_k=\tfrac1{d_{\rm q}}\Tr G_k$ to make the step size scale invariant, and $\veps>0$ regularizes the inverse.

This update is invariant, to first order, under a smooth invertible reparametrization $\bm\phi=\Phi(\bftheta)$ with Jacobian $J$.
The metric transforms as a covariant $2$-tensor, $g^{(\phi)}=J^{-\top}g^{(\theta)}J^{-1}$, and the gradient as a covector, $\nabla_\phi\mathcal{L}=J^{-\top}\nabla_\theta\mathcal{L}$, hence $(g^{(\phi)})^{-1}\nabla_\phi\mathcal{L}=J\,(g^{(\theta)})^{-1}\nabla_\theta\mathcal{L}$.
The update direction is the same tangent vector pushed forward by $J$ and does not depend on the coordinate chart.
The Euclidean gradient lacks this property, and the invariance is what allows clients that use different circuit parametrizations or encodings to be combined consistently.

\subsection{Achievable-precision aggregation}
\label{sec:agg}
We now define what is aggregated and with what weights.
Let the \emph{achievable precision} of client $k$ be the inverse covariance of the quantity the server combines, the local update direction $\mathbf d_k=G_k^{-1}\widetilde\nabla f_k$.
Because $\mathbf d_k$ is linear in the gradient noise, its fluctuation covariance is
\begin{equation}
\Sigma_k=\Cov(\mathbf d_k)=G_k^{-1}V_kG_k^{-1},
\label{eq:Sigma}
\end{equation}
and the metric appears as a noise-shaping Jacobian.
We define the \emph{precision matrix}
\begin{equation}
\Lambda_k\coloneqq\Sigma_k^{-1}=G_k\,V_k^{-1}\,G_k.
\label{eq:Lambda}
\end{equation}

The covariance $V_k$ is not arbitrary.
It is lower-bounded by the quantum Cram\'er--Rao bound for $M_k$ measurement shots, together with the incompatibility correction that the mean Uhlmann curvature induces,
\begin{equation}
V_k\;\ge\;\frac{1}{M_k}\Big[\big(F^{\mathrm Q,\mathrm{SLD}}_k\big)^{-1}
+C\!\big(\omega^{\rm Uhl}_k\big)\Big].
\label{eq:qcrb}
\end{equation}
Here $C(\cdot)\ge0$ is the incompatibility penalty.
Writing the antisymmetric mean Uhlmann curvature as the matrix $[\,U_k\,]_{ij}=\omega^{\rm Uhl}_{k,ij}$ and abbreviating $\mathcal F_k\equiv F^{\mathrm Q,\mathrm{SLD}}_k$, an explicit positive-semidefinite representative obtained from the gap between the Holevo and SLD bounds is
\begin{equation}
C\!\big(\omega^{\rm Uhl}_k\big)=\mathcal F_k^{-1}\,U_k\,\mathcal F_k^{-1}\,U_k^{\!\top}\,\mathcal F_k^{-1}
=-\,\mathcal F_k^{-1}\,U_k\,\mathcal F_k^{-1}\,U_k\,\mathcal F_k^{-1}\ge0,
\label{eq:Cdef}
\end{equation}
where the last equality uses $U_k^{\!\top}=-U_k$.
By construction $C(\omega^{\rm Uhl}_k)$ vanishes if and only if the parameters are compatible ($U_k=0$, for example for pure states), and it grows with the noncommutativity of the SLDs.
It quantifies how much harder it is to estimate all parameters of a noisy client at once.

The server combines the directions linearly, $\widehat{\mathbf d}=\sum_kA_k\mathbf d_k$ with $\sum_kA_k=\openone$, which keeps the combination unbiased for the consensus direction, and it chooses the matrices $A_k$ to minimize the aggregated variance $\Tr\big(\sum_kA_k\Sigma_kA_k^{\!\top}\big)$.
By Lagrange multipliers this generalized-least-squares (maximum-ratio-combining) problem has the solution
\begin{equation}
A_k^{\rm opt}=\Big(\textstyle\sum_j\Lambda_j\Big)^{-1}\Lambda_k,
\qquad
\min\Tr\Big(\sum_kA_k\Sigma_kA_k^{\!\top}\Big)=\Tr\Big[\big(\textstyle\sum_k\Lambda_k\big)^{-1}\Big],
\label{eq:mrc}
\end{equation}
and each client is weighted by its precision and the residual variance is the inverse of the total precision~\cite{Brennan59,Li20}.
The server then updates the global parameters on the tangent space, the first-order Bures barycenter,
\begin{equation}
\bftheta^{(t+1)}=\bftheta^{(t)}+\eta_{\rm Srv}\sum_kA_k^{\rm opt}\big(\bftheta_k^{(t+1)}-\bftheta^{(t)}\big),
\label{eq:server}
\end{equation}
with a server step $\eta_{\rm Srv}\in(0,1]$ that damps client drift.
A consequence of \eqref{eq:mrc} is that the aggregated variance of the proposed rule never exceeds that of QFedAvg.

\begin{proposition}[Variance dominance over QFedAvg]
\label{prop:dominance}
Let $\mathcal A=\{\,\{A_k\}:\sum_kA_k=\openone\,\}$ and let $\mathcal V(\{A_k\})=\Tr\big(\sum_kA_k\Sigma_kA_k^{\!\top}\big)$.
Then for every admissible weighting, and in particular for the QFedAvg choice $A_k=p_k\openone$,
\begin{equation}
\Tr\Big[\big(\textstyle\sum_k\Lambda_k\big)^{-1}\Big]
\;=\;\min_{\mathcal A}\mathcal V
\;\le\;\mathcal V\big(\{p_k\openone\}\big)=\sum_k p_k^2\,\Tr\Sigma_k ,
\label{eq:dominance}
\end{equation}
with equality if and only if $\Lambda_k=p_k\sum_j\Lambda_j$ for all $k$, that is, if and only if the clients' precisions are already proportional to their data shares, in particular if all devices are identical and $p_k=1/K$.
\end{proposition}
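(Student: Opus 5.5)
The plan is to treat $\mathcal V$ as a strictly convex quadratic on the affine set $\mathcal A$ and find its minimum directly by completing the square, rather than through Lagrange multipliers. Throughout we assume each $\Sigma_k$ is positive definite, so that $\Lambda_k=\Sigma_k^{-1}$ exists and $S\coloneqq\sum_j\Lambda_j$ is positive definite.

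\textbf{Step 1 (square completion).} For any admissible $\{A_k\}$ write $A_k=A_k^{\rm opt}+E_k$ with $A_k^{\rm opt}=S^{-1}\Lambda_k$ as in \eqref{eq:mrc}. Since $\sum_kA_k^{\rm opt}=S^{-1}S=\openone$, the perturbations satisfy $\sum_kE_k=0$. The cross term is
\begin{equation*}
\sum_k\Tr\big(E_k\Sigma_kA_k^{{\rm opt}\top}\big)=\sum_k\Tr\big(E_k\Sigma_k\Lambda_kS^{-1}\big)=\Tr\Big(\big(\textstyle\sum_kE_k\big)S^{-1}\Big)=0,
\end{equation*}
using $\Lambda_k^\top=\Lambda_k$ and $S^{-\top}=S^{-1}$. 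Hence
\begin{equation*}
\mathcal V(\{A_k\})=\mathcal V(\{A_k^{\rm opt}\})+\sum_k\Tr\big(E_k\Sigma_kE_k^\top\big).
\end{equation*}
The extra sum is nonnegative, and it vanishes if and only if every $E_k=0$, because $\Sigma_k\succ0$. Evaluating at the optimum gives $\sum_kS^{-1}\Lambda_k\Sigma_k\Lambda_kS^{-1}=S^{-1}SS^{-1}=S^{-1}$, so $\min_{\mathcal A}\mathcal V=\Tr S^{-1}$, with the minimizer unique.

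\textbf{Step 2 (specialize to QFedAvg).} Since $\sum_kp_k=1$, the choice $A_k=p_k\openone$ is admissible. It gives $\mathcal V=\sum_kp_k^2\Tr\Sigma_k$, and the inequality follows from Step 1.

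\textbf{Step 3 (equality case).} By uniqueness of the minimizer, equality holds if and only if $p_k\openone=S^{-1}\Lambda_k$ for all $k$, that is, $\Lambda_k=p_kS$. For identical devices with $p_k=1/K$, all $\Lambda_k=\Lambda$, so $S=K\Lambda$ and the condition $\Lambda=\tfrac1K\cdot K\Lambda$ holds.

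The main obstacle is the cross-term cancellation, which relies on the symmetry of $\Lambda_k$ and $S$ and on the constraint $\sum_kE_k=0$. The uniqueness claim needed for the ``only if'' direction also requires strict positivity of the $\Sigma_k$. If some $\Sigma_k$ were singular, one would pass to the regularized metric $G_k+\veps\openone$ or restrict to its support, and the equivalence in the equality case could then fail.
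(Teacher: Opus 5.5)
Your proof is correct, and it reaches the result by a different route from the paper's. The paper argues abstractly. $\mathcal V$ is a sum of strictly convex quadratics on the affine set $\mathcal A$, so it has a unique minimizer. That minimizer is identified from the Lagrange stationarity condition $2A_k\Sigma_k=\Theta$ (Eq.~\eqref{eq:mrc} and Step~3 of Appendix~\ref{app:proof}), convexity certifies it as the global minimum, and uniqueness then gives the equality case exactly as in your Step~3.

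Your completion of the square replaces the appeals to existence, strict convexity, and stationarity-implies-optimality with one exact identity, $\mathcal V(\{A_k\})=\Tr S^{-1}+\sum_k\Tr(E_k\Sigma_kE_k^{\top})$ with $E_k=A_k-S^{-1}\Lambda_k$. This identity gives the minimum value, minimality, and uniqueness at once. It also gives the excess variance of any admissible rule in closed form. For QFedAvg the gap is $\sum_k\Tr\big[(p_k\openone-S^{-1}\Lambda_k)\Sigma_k(p_k\openone-S^{-1}\Lambda_k)^{\top}\big]$, a quantitative measure of device heterogeneity that the paper's proof does not supply. In exchange, the Lagrangian route derives the minimizer rather than requiring you to know it in advance.

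One bookkeeping point: the expansion produces two cross terms, yours and its transpose $\sum_k\Tr(A_k^{\rm opt}\Sigma_kE_k^{\top})=\Tr\big(S^{-1}(\sum_kE_k)^{\top}\big)$. The second also vanishes, so nothing changes, but you should write both. Both proofs rely on $\Sigma_k\succ0$ for the ``only if'' direction, as you note.
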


\begin{proof}
The set $\mathcal A$ is affine and the map $\{A_k\}\mapsto\mathcal V$ is a sum of the strictly convex quadratics $\Tr(A_k\Sigma_kA_k^{\!\top})$ when every $\Sigma_k\succ0$, and a minimizer therefore exists and is unique.
The choice $A_k=p_k\openone$ is admissible because $\sum_kp_k\openone=\openone$, hence $\min_{\mathcal A}\mathcal V\le\mathcal V(\{p_k\openone\})=\sum_kp_k^2\Tr\Sigma_k$.
The minimizer is computed in Eq.~\eqref{eq:mrc} and gives value $\Tr[(\sum_k\Lambda_k)^{-1}]$, which establishes the equality on the left of \eqref{eq:dominance}.
Uniqueness of the minimizer makes the inequality tight exactly when $p_k\openone=A_k^{\rm opt}=(\sum_j\Lambda_j)^{-1}\Lambda_k$ for all $k$, that is, $\Lambda_k=p_k\sum_j\Lambda_j$.
\end{proof}

Proposition~\ref{prop:dominance} concerns the aggregated variance alone and does not by itself assert an advantage in any other figure of merit.
Under device homogeneity the proposed rule degenerates to QFedAvg, and under any heterogeneity it reduces the aggregated variance, which by Theorem~\ref{thm:conv} is the quantity that sets the asymptotic error floor.

For a scalable implementation we use the data-size\ $\times$\ precision scalar weights
\begin{equation}
w_k=\frac{p_k\,\varrho_k}{\sum_j p_j\varrho_j},\qquad
\varrho_k=\frac{\kappa(p_k)^2\,\|G^{\rm Pure}\|_F^2}{\,v_k\big(1+\gamma\,r_k\big)},\quad
r_k=\frac{\|U_k\|_F^2}{\|G_k\|_F^2},
\label{eq:scalar-weight}
\end{equation}
where $\kappa(p_k)$ is the depolarizing attenuation of Sec.~\ref{sec:depol}, $v_k$ is the per-component shot-noise variance of the gradient estimate, $r_k$ is the relative Uhlmann incompatibility, and $\gamma\ge0$ tempers the down-weighting to keep noisy clients discounted and not discarded.
Equation~\eqref{eq:scalar-weight} equals $\varrho_k=\Tr\Lambda_k/(1+\gamma r_k)$ under the isotropic model $V_k\simeq v_k\openone$ of Sec.~\ref{sec:depol}, and it reduces to QFedAvg when all devices are identical, while increasingly favouring high-shot, low-noise, compatible clients as heterogeneity grows.

To keep the rule a discount rather than a veto we retain a floor $\lambda_{\rm fl}$, $w_k\leftarrow(1-\lambda_{\rm fl})w_k+\lambda_{\rm fl}p_k$, which guarantees every client a nonzero weight, and no data are silently discarded.
The floor is adapted to the heterogeneity of the client pool through
\begin{equation}
\lambda_{\rm fl}=\lambda_0+(\lambda_{\max}-\lambda_0)\,s,\qquad
s=\mathrm{clip}\!\Big(\frac{p_{\rm hi}-\bar p}{p_{\rm hi}-p_{\rm lo}},0,1\Big),\qquad
\bar p=\sum_k p_k\,p_k^{\rm dep},
\label{eq:adaptive-floor}
\end{equation}
where $p_k^{\rm dep}$ is the depolarizing rate of client $k$ and $\bar p$ is the data-weighted mean depolarizing rate of the pool.
When the pool is uniformly reliable ($\bar p\le p_{\rm lo}$) the floor takes its upper value $\lambda_{\max}$ and the rule approaches data-size averaging, consistent with the homogeneous limit of Proposition~\ref{prop:dominance}, and when the pool is strongly heterogeneous ($\bar p\ge p_{\rm hi}$) the floor takes its lower value $\lambda_0$ and the precision weighting acts in full.

\subsection{Shared model with quantum feature map and Pauli-expectation readout}
\label{sec:readout}
All three schemes share the same model, and any performance difference is attributable to the geometry and the aggregation alone.
The model is a dressed variational classifier~\cite{Mari20}, a quantum feature map followed by a trained classical readout.
The feature map is the circuit of Fig.~\ref{fig:circuit}, with $L$ layers, each of which re-uploads the data through $R_y$ rotations~\cite{PerezSalinas20}, applies a trainable $R_zR_yR_z$ block per qubit, and closes with a linear CNOT entangler.
The quantum parameters are $\bftheta_{\rm q}\in\bbR^{d_{\rm q}}$ with $d_{\rm q}=3nL$.

The measured features are the expectation values of a compact set of Pauli observables,
\begin{equation}
\{P_f\}=\{Z_q\}_{q=1}^{n}\cup\{X_q\}_{q=1}^{n}\cup\{Y_q\}_{q=1}^{n}
\cup\{Z_qZ_r\}_{q<r},
\qquad
M_{\rm f}=3n+\tfrac{n(n-1)}{2},
\label{eq:paulis}
\end{equation}
which for $n=6$ gives $M_{\rm f}=33$.
This is the multiclass generalization of a Pauli-$Z$ readout.
Each $\phi_f=\langle P_f\rangle\in[-1,1]$ is an $O(1)$ quantity, unlike the $2^n$ computational-basis probabilities, which vanish as $2^{-n}$ and make the readout ill-conditioned at larger $n$, and the whole set is obtained on hardware from only \emph{three} measurement settings.
The $Z$ setting supplies all $Z_q$ and $Z_qZ_r$, and the $X$ and $Y$ settings supply $X_q$ and $Y_q$.

The class scores are produced by a trainable readout head acting on $\bm\phi=(\phi_1,\dots,\phi_{M_{\rm f}})$,
\begin{equation}
\mathbf z(\mathbf x)=W_2\,\tanh\!\big(W_1\bm\phi(\mathbf x;\bftheta_{\rm q})+\mathbf b_1\big)+\mathbf b_2,
\qquad \hat y=\argmax_c z_c,
\label{eq:head}
\end{equation}
with $\bftheta_{\rm h}=(W_1,\mathbf b_1,W_2,\mathbf b_2)$ and $H$ hidden units.
The full parameter vector is $\bftheta=(\bftheta_{\rm q},\bftheta_{\rm h})$.
A linear head ($H=0$) is the literal Pauli-expectation readout, and it is capped by the linear separability of the PCA features and cannot exceed about $0.85$ on the ten-class digit task.
The single $\tanh$ layer removes that ceiling while keeping the readout shared and keeping the quantum block small.
The Bures geometry acts \emph{only} on the quantum block $\bftheta_{\rm q}$, the head is a classical Euclidean object, and the mechanism of Eqs.~\eqref{eq:local}--\eqref{eq:scalar-weight} is unaffected and the head is identical for QFedAvg, QFedFisher and QFedQGT.

\begin{figure}[tb]
\centering
\resizebox{0.7\linewidth}{!}{%
\begin{quantikz}[thin lines]
\lstick{$\ket{0}$} & \gate{R_y(x_1)}\gategroup[3,steps=1,style={dashed,rounded corners,fill=lightgreen,inner sep=2pt},background,label style={label position=below,anchor=north,yshift=-0.2cm}]{{\footnotesize encode}} & \gate{R_z}\gategroup[3,steps=3,style={dashed,rounded corners,fill=lightblue,inner sep=2pt},background,label style={label position=below,anchor=north,yshift=-0.2cm}]{{\footnotesize trainable $V_l(\bm\theta_l)$}} & \gate{R_y} & \gate{R_z} & \ctrl{1}\gategroup[3,steps=2,style={dashed,rounded corners,fill=lightpink,inner sep=2pt},background,label style={label position=below,anchor=north,yshift=-0.2cm}]{{\footnotesize entangle $W_l$}} & \qw & \meter{P} \\
\lstick{$\ket{0}$} & \gate{R_y(x_2)} & \gate{R_z} & \gate{R_y} & \gate{R_z} & \targ{} & \ctrl{1} & \meter{P} \\
\lstick{$\ket{0}$} & \gate{R_y(x_3)} & \gate{R_z} & \gate{R_y} & \gate{R_z} & \qw & \targ{} & \meter{P}
\end{quantikz}}
\caption{
One layer of the shared variational quantum classifier, drawn for three
qubits, with the experiments using $n=6$ and $L=3$.
Each layer re-uploads the data through $R_y(x_j)$ (green), applies a trainable single-qubit block $R_zR_yR_z$ (blue), and a linear CNOT entangler (pink).
After $L$ layers the Pauli observables \eqref{eq:paulis} are measured in three settings ($Z$, $X$, $Y$) and fed to the shared readout head \eqref{eq:head}.
The three QFL methods use this same circuit and the same head, and differ only in the local update and the aggregation.
}
\label{fig:circuit}
\end{figure}

\subsection{Hardware estimation without diagonal truncation}
The diagonal approximation $g^{\rm Bur}_k\approx\diag(g^{\rm Bur}_{k,ii})$ deletes the off-diagonal elements, which by \eqref{eq:uhl} are where the incompatibility $U_k$ lives, and it would disable the mechanism of
Eqs.~\eqref{eq:Cdef}--\eqref{eq:scalar-weight}.
We do not use it.

Instead we exploit the layered structure of the ansatz $U(\bftheta)=V_L(\bm\theta_L)W_L\cdots V_1(\bm\theta_1)W_1$.
For the pure-state QGT the block belonging to layer $l$ is $G^{(l)}_{ij}=\bra{\psi_l}K_iK_j\ket{\psi_l}-\bra{\psi_l}K_i\ket{\psi_l}\bra{\psi_l}K_j\ket{\psi_l}$, with generators $K_i$ and the sub-circuit state $\ket{\psi_l}$ prepared just before layer $l$, and one circuit per layer suffices~\cite{Stokes20}.
The Bures metric is the Hessian of the Uhlmann fidelity $\mathcal F_B(\rho,\sigma)=\big(\Tr\sqrt{\sqrt\rho\,\sigma\sqrt\rho}\big)^2$,
\begin{equation}
g^{\rm Bur}_{k,ij}=-\tfrac12\,\partial_{\theta_i'}\partial_{\theta_j'}
\mathcal F_B\!\big(\rho_k(\bftheta),\rho_k(\bftheta')\big)\big|_{\bftheta'=\bftheta},
\end{equation}
and we estimate each layer block, including its off-diagonal elements, from
parameter-shifted fidelities~\cite{Mari21,Meyer21,Beckey22}.
This retains all intra-layer correlations at $O(L)$ measurement cost rather than the $O(d_{\rm q}^2)$ of the full matrix, and the error from neglecting inter-layer blocks is bounded in Appendix~\ref{app:block}.

\subsection{Exactly solvable case of global depolarizing noise}
\label{sec:depol}
For the global depolarizing model $\rho_k=(1-p_k)\ketbra{\psi}{\psi}+p_k\openone/N$ with $N=2^n$, Eq.~\eqref{eq:bures} can be evaluated in closed form (Appendix~\ref{app:kappa}),
\begin{equation}
F^{\mathrm Q,\mathrm{SLD}}_{k,ij}=\kappa(p_k)\,F^{\mathrm Q,\mathrm{Pure}}_{ij},
\qquad
\kappa(p)=\frac{(1-p)^2}{1-p+2p/N},
\label{eq:kappa}
\end{equation}
and $G_k=\kappa(p_k)\,G^{\rm Pure}$.
One has $\kappa\to1$ as $p\to0$, recovering the pure-state QFIM, and $\kappa\approx1-p$ for $N\gg1$.
In this model the correction is a scalar.
The natural-gradient \emph{direction} is unchanged, and the \emph{achievable precision} is attenuated, which is the physical reason a noisy client should be down-weighted.

With $V_k\approx v_k\openone$ and
\begin{equation}
v_k=\frac{1}{M_k(1-p_k)^2}
\label{eq:vk}
\end{equation}
(finite-shot variance amplified by the depolarizing attenuation of the measured expectation values, $\langle P\rangle\to(1-p_k)\langle P\rangle$), the precision \eqref{eq:Lambda} becomes $\Lambda_k=\kappa(p_k)^2 v_k^{-1}(G^{\rm Pure})^2$, and
\begin{equation}
\Tr\Lambda_k=\kappa(p_k)^2\,\|G^{\rm Pure}\|_F^2\;M_k\,(1-p_k)^2 ,
\label{eq:trLambda}
\end{equation}
that is, a noisy client is suppressed by the product $\kappa(p_k)^2(1-p_k)^2$ and by its shot budget $M_k$.
The two factors are asymmetric.
The precision is only \emph{linear} in the shot budget $M_k$ and \emph{quartic} in $(1-p_k)$, through $\kappa^2(1-p)^2\simeq(1-p)^4$ for $N\gg1$, and the depolarizing rate rather than the shot scarcity is the dominant determinant of a client's reliability.
Section~\ref{sec:num} confirms this prediction, in that the accuracy gap between QFedQGT and QFedAvg grows with $p_k$ and depends only weakly on $M_k$.

For local (per-qubit) depolarizing or coherent errors the correction is matrix-valued and $U_k\neq0$, and the full mixed-state geometry is required, and for the coherent miscalibration used in our experiments we take the leading-order proxy $r_k\simeq p_k^2$ (Appendix~\ref{app:exp}).

\subsection{Protocol}

\begin{figure}[tb]
\centering
\begin{minipage}{0.92\linewidth}
\rule{\linewidth}{1pt}\\[-1mm]
\textbf{Algorithm: QFedQGT}\\[-2mm]
\rule{\linewidth}{0.4pt}
\begin{algorithmic}[1]
\State \textbf{Input:} clients $\{1,\dots,K\}$, rounds $T$, server step $\eta_{\rm Srv}$,
local step $\eta_{\rm Loc}$, temper $\gamma$, floor $\lambda_{\rm fl}$
\State initialize global parameters $\bftheta^{(0)}=(\bftheta_{\rm q}^{(0)},\bftheta_{\rm h}^{(0)})$
\For{$t=0,\dots,T-1$}
  \State server broadcasts $\bftheta^{(t)}$ and the block-diagonal $G^{\rm Pure}(\bftheta^{(t)})$
  \For{each client $k$ in parallel}
    \State $\bftheta_k\gets\bftheta^{(t)}$
    \For{local step $=1,\dots,\tau$}
      \State estimate $\widetilde\nabla f_k$ (parameter shift) and $G_k=\kappa(p_k)G^{\rm Pure}$
      \Comment{Eqs.~\eqref{eq:Gpure}--\eqref{eq:bures}}
      \State $\bftheta_{{\rm q},k}\gets\bftheta_{{\rm q},k}-\eta_{\rm Loc}(\widehat G_k+\veps\openone)^{-1}\widetilde\nabla_{\rm q} f_k$,\ \
      $\bftheta_{{\rm h},k}\gets\bftheta_{{\rm h},k}-\eta_{\rm Loc}\widetilde\nabla_{\rm h} f_k$
    \EndFor
    \State estimate $\kappa(p_k),\,v_k,\,r_k$ and form $\Tr\Lambda_k$ and $w_k$
    \Comment{Eqs.~\eqref{eq:Lambda}--\eqref{eq:scalar-weight}}
    \State send $(\bftheta_k,\,w_k)$ to server
  \EndFor
  \State $\bftheta^{(t+1)}\gets\bftheta^{(t)}+\eta_{\rm Srv}\sum_k w_k(\bftheta_k-\bftheta^{(t)})$
\EndFor
\State \textbf{return} $\bftheta^{(T)}$
\end{algorithmic}
\rule{\linewidth}{1pt}
\end{minipage}
\label{alg:qflqgt}
\end{figure}

Algorithm summarizes the procedure.
Each client performs a fixed number $\tau$ of local minibatch steps per round, and the local work per communication round is a controlled quantity.
All client-side randomness is seeded deterministically from the pair (round, client), and a run is reproducible.

The three compared methods share the VQC of Fig.~\ref{fig:circuit} and the readout \eqref{eq:head}, and differ only in the local update and the aggregation weights.
QFedAvg uses $w_k=p_k$ and a plain (Adam) local optimizer, QFedFisher uses layerwise diagonal classical Fisher weights, and QFedQGT uses Eqs.~\eqref{eq:local} and \eqref{eq:scalar-weight}.

\section{Convergence analysis}
\label{sec:conv}
We analyze one local step per round, and the multi-step case follows by treating the
accumulated update as a stochastic direction.
We state the assumptions and explain why each is needed.

\begin{assumption}\label{ass:all}
\textup{(A1) ($L$-smoothness)} $F$ is $L$-smooth, that is, $\nabla F$ is $L$-Lipschitz, $\norm{\nabla F(\mathbf u)-\nabla F(\mathbf v)}\le L\norm{\mathbf u-\mathbf v}$, equivalently $F(\mathbf v)\le F(\mathbf u)+\nabla F(\mathbf u)^{\!\top}(\mathbf v-\mathbf u)+\tfrac{L}{2}\norm{\mathbf v-\mathbf u}^2$.
This bounds how fast the gradient varies and lets a finite step decrease the loss.
\textup{(A2) (Polyak--{\L}ojasiewicz)} $F$ satisfies $\norm{\nabla F(\bftheta)}^2\ge 2\mu\,(F(\bftheta)-F^*)$ with $\mu>0$, where $F^*=\min_{\bftheta}F(\bftheta)$.
This holds for $\mu$-strongly convex $F$ and, locally, near a non-degenerate minimum, and it converts gradient norm into objective suboptimality.
\textup{(A3) (unbiased bounded noise)} the gradient noises $\bm\xi_{k}$ are unbiased ($\Ex[\bm\xi_k]=\mathbf 0$), have covariance $V_k$, and are independent across clients.
\textup{(A4) (bounded heterogeneity)} $\frac1K\sum_k\norm{\nabla f_k(\bftheta)-\nabla F(\bftheta)}^2\le\zeta^2$, where $\zeta\ge0$ measures the non-IID spread of the client gradients and is $0$ in the IID case.
\textup{(A5) (metric regularity)} $\lambda_{\min}\openone\le G_k\le\lambda_{\max}\openone$ with $0<\lambda_{\min}\le\lambda_{\max}$, and $\sum_k\norm{A_k}\le\beta$.
This keeps the preconditioner well conditioned, away from barren-plateau degeneracy, and the weights bounded.
\textup{(A6) (descent alignment)} there exists $c>0$ with $\nabla F^{\!\top}\big(\sum_kA_kG_k^{-1}\nabla f_k\big)\ge c\norm{\nabla F}^2$, which states that the aggregated natural-gradient direction is a descent direction, holding under (A4)--(A5) when the heterogeneity is not dominant.
\end{assumption}

\noindent\emph{Plausibility of the assumptions.}
Each hypothesis is standard in the federated-optimization or the
variational-quantum literature, and is satisfied in the regime studied here.
(A1) holds for VQC objectives because each circuit output \eqref{eq:vqc-output} is a finite trigonometric polynomial in $\bftheta$ with frequencies fixed by the generators of $U(\bftheta)$~\cite{Schuld21,Nakanishi20}, hence $C^\infty$ with uniformly bounded second derivatives, and $\nabla F$ is Lipschitz with an $L$ set by the number of layers and the spectral norm of the generators.
$L$-smoothness is also the baseline assumption in non-convex stochastic-optimization analyses~\cite{Bottou18}.
(A2) is the Polyak--{\L}ojasiewicz condition, weaker than strong
convexity, and is the standard relaxation under which linear convergence is proved for over-parametrized and non-convex models~\cite{Karimi16}.
For VQCs it holds locally in the basin of a non-degenerate minimum, once training has escaped the initialization plateau, and it underlies convergence guarantees for FedProx and related FL methods~\cite{Li20}.

(A3) is exact for the parameter-shift estimator, in which shifting a parameter by $\pm\pi/2$ yields an unbiased gradient component whose only stochasticity is finite-shot multinomial sampling, hence zero mean and a finite, shot-determined covariance $V_k$~\cite{Mitarai18,Schuld19}, with independence across clients following from independent hardware executions.
(A4) is the canonical bounded-gradient-dissimilarity hypothesis used to quantify
non-IID heterogeneity in federated optimization~\cite{Li20,Wang20}, with $\zeta=0$ recovering the IID case, and our Dirichlet-$\alpha$ partition controls $\zeta$ through $\alpha$.

(A5) requires the preconditioner to stay well conditioned.
The Tikhonov shift $\veps\openone$ in \eqref{eq:local} enforces $\lambda_{\min}\ge\veps>0$ by construction, and the normalization $\widehat G_k=G_k/\bar g_k$ bounds $\lambda_{\max}$, which is the mechanism by which natural-gradient methods avoid the vanishing curvature of barren plateaus~\cite{McClean18,Cerezo21}, and $\sum_k\norm{A_k}\le\beta$ follows from the normalized weights \eqref{eq:scalar-weight} and the floor $\lambda_{\rm fl}$.
(A6) states that the precision-weighted aggregate is, on average, a descent
direction.
Under (A4)--(A5) the aggregate differs from the true gradient by a term
controlled by $\zeta$ and $\lambda_{\min}$, and a strictly positive alignment constant $c$ exists whenever the heterogeneity does not dominate the curvature, a condition analogous to the bounded-dissimilarity requirements under which FedProx and FedNova converge~\cite{Li20,Wang20}.

\begin{theorem}[Robust convergence of QFedQGT]
\label{thm:conv}
Let Assumption~\ref{ass:all} hold and let the server step satisfy $\eta\equiv\eta_{\rm Srv}\le\eta_0$ with $\eta_0=\min\!\big\{1/L,\ c\lambda_{\min}^2/(2L\beta^2)\big\}$.
Then the global iterate $\bftheta_T$ produced by the precision-weighted rule \eqref{eq:mrc} satisfies
\begin{equation}
\Ex[F(\bftheta_T)-F^*]\le
(1-c\eta\mu)^{T}\big(F(\bftheta_0)-F^*\big)
+\frac{L\eta}{2c\mu}\Big(\zeta_{\rm eff}^2+\Tr\big[(\textstyle\sum_k\Lambda_k)^{-1}\big]\Big),
\label{eq:bound}
\end{equation}
where $F^*=\min F$, $T$ is the number of rounds, $L$ and $\mu$ are the constants of \textup{(A1)--(A2)}, $c$ is the alignment constant of \textup{(A6)}, $\Lambda_k$ are the precision matrices \eqref{eq:Lambda}, and $\zeta_{\rm eff}^2=2\beta^2 K\,\zeta^2/\lambda_{\min}^2$ is the effective heterogeneity floor built from $\zeta$ \textup{(A4)} and $\beta,\lambda_{\min}$ \textup{(A5)}.
\end{theorem}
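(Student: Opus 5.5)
The plan is a standard one-step descent argument for the global iterate, followed by unrolling the resulting linear recursion. With one local step per round, the server update \eqref{eq:server} with the optimal matrices \eqref{eq:mrc} reads
\[
\bftheta_{t+1}=\bftheta_t-\eta\,\mathbf D_t,\qquad \mathbf D_t=\sum_kA_kG_k^{-1}\widetilde\nabla f_k(\bftheta_t),
\]
where the local step size is absorbed into $\eta$, as in the statement. Using (A3), I decompose $\mathbf D_t=\bar{\mathbf D}_t+\mathbf e_t$ with $\bar{\mathbf D}_t=\sum_kA_kG_k^{-1}\nabla f_k$ and $\mathbf e_t=\sum_kA_kG_k^{-1}\bm\xi_k$. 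This gives $\Ex[\mathbf e_t]=\mathbf 0$ and
\[
\Ex\norm{\mathbf e_t}^2=\Tr\Big(\sum_kA_k\Sigma_kA_k^{\!\top}\Big)=\Tr\big[(\textstyle\sum_k\Lambda_k)^{-1}\big]
\]
by independence across clients, \eqref{eq:Sigma}, and the optimality value in \eqref{eq:mrc}. This is exactly where Proposition~\ref{prop:dominance} enters the bound.

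First I apply (A1) and take the conditional expectation:
\[
\Ex F(\bftheta_{t+1})\le F(\bftheta_t)-\eta\,\nabla F^{\!\top}\bar{\mathbf D}_t+\tfrac{L\eta^2}{2}\big(\norm{\bar{\mathbf D}_t}^2+\Ex\norm{\mathbf e_t}^2\big).
\]
The linear term is controlled by (A6): $-\eta\nabla F^{\!\top}\bar{\mathbf D}_t\le -c\eta\norm{\nabla F}^2$.

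For the quadratic term I write $\nabla f_k=\nabla F+(\nabla f_k-\nabla F)$ and use $\norm{a+b}^2\le2\norm a^2+2\norm b^2$. Combined with (A5) ($\norm{G_k^{-1}}\le1/\lambda_{\min}$, $\sum_k\norm{A_k}\le\beta$) and Cauchy--Schwarz plus (A4) for the heterogeneity sum, this should give
\[
\norm{\bar{\mathbf D}_t}^2\le\frac{2\beta^2}{\lambda_{\min}^2}\norm{\nabla F}^2+\frac{2\beta^2K\zeta^2}{\lambda_{\min}^2}=\frac{2\beta^2}{\lambda_{\min}^2}\norm{\nabla F}^2+\zeta_{\rm eff}^2.
\]
The factor $K$ arises from $\big(\sum_k\norm{A_k}\norm{\mathbf a_k}\big)^2\le\beta^2\max_k\norm{\mathbf a_k}^2\le\beta^2\sum_k\norm{\mathbf a_k}^2\le\beta^2K\zeta^2$.

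The step-size condition $\eta\le c\lambda_{\min}^2/(2L\beta^2)$ then makes $\tfrac{L\eta^2}{2}\cdot\tfrac{2\beta^2}{\lambda_{\min}^2}\norm{\nabla F}^2\le\tfrac{c\eta}{2}\norm{\nabla F}^2$, so at least $\tfrac{c\eta}{2}\norm{\nabla F}^2$ of descent survives. Applying (A2) turns this into $-c\eta\mu(F-F^*)$, and with total expectation we get
\[
\Ex[F_{t+1}-F^*]\le(1-c\eta\mu)\,\Ex[F_t-F^*]+\tfrac{L\eta^2}{2}\Big(\zeta_{\rm eff}^2+\Tr\big[(\textstyle\sum_k\Lambda_k)^{-1}\big]\Big).
\]
Unrolling this over $T$ rounds and bounding the geometric series by $1/(c\eta\mu)$ gives exactly \eqref{eq:bound}. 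The constraint $\eta\le1/L$ (together with $c\le\beta/\lambda_{\min}$-type consistency and $\mu\le L$) ensures $0<1-c\eta\mu<1$.

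The main obstacle is the constant bookkeeping around (A2)/(A6). Using only half the descent yields $1-c\eta\mu$ precisely when the PL constant is applied as $\norm{\nabla F}^2\ge2\mu(F-F^*)$, so that $\tfrac{c\eta}{2}\cdot2\mu=c\eta\mu$; I would verify that this matches and adjust the factor split if needed. A secondary point is that $A_k$ and $\Lambda_k$ depend on $\bftheta_t$. I would treat them as fixed given the current iterate (conditioning on $\bftheta_t$), and read the variance floor as a uniform bound on $\Tr[(\sum_k\Lambda_k)^{-1}]$ over the trajectory.
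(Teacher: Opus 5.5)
Your proposal is correct and follows the paper's proof step for step. It uses the smoothness descent inequality, (A6) on the linear term, and the mean/fluctuation split whose noise part equals $\Tr[(\sum_k\Lambda_k)^{-1}]$ at the optimal $A_k$. It then uses the $2\|a\|^2+2\|b\|^2$ heterogeneity split giving $\zeta_{\rm eff}^2$ (your $\ell^1$--$\ell^\infty$ bound yields the same constant as the paper's weighted Cauchy--Schwarz), the half-descent from $\eta\le c\lambda_{\min}^2/(2L\beta^2)$, PL, and unrolling.

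The one loose end is $1-c\eta\mu\ge0$. With $\eta\le1/L$ and $\mu\le L$ you only get $c\eta\mu\le c$, and $c\le\beta/\lambda_{\min}$ does not force $c\le1$, since $\beta\ge\|\sum_kA_k\|=1$ and $\lambda_{\min}$ may be small. Do as the paper does instead: take $c\le1$ without loss of generality (shrinking $c$ preserves (A6)), and obtain $\mu\le L$ from $F^*\le F(\bftheta-\nabla F/L)\le F(\bftheta)-\|\nabla F\|^2/(2L)$ combined with (A2).
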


Equation~\eqref{eq:bound} separates into the two effects that the experiments below exhibit.
The first term decays geometrically at rate $(1-c\eta\mu)$, and the number of communication rounds needed to reach a target suboptimality $\epsilon$ is $T_\epsilon\simeq\log(1/\epsilon)/(c\eta\mu)$.
Anything that raises the alignment constant $c$ or admits a larger stable step $\eta$ reduces the round count.
The Bures preconditioner contributes to both.
It equalizes the curvature, which reduces $\lambda_{\max}/\lambda_{\min}$ and with it the effective condition number, and through $\eta_0=\min\{1/L,c\lambda_{\min}^2/(2L\beta^2)\}$ this enlarges the admissible step, and the precision weights raise $c$ by removing the mis-aligned contributions of the unreliable clients from the aggregate.

The second term is the residual error floor.
Its variance contribution $\Tr[(\sum_k\Lambda_k)^{-1}]$ is the minimal aggregated variance \eqref{eq:mrc}, which by Proposition~\ref{prop:dominance} is never larger than the QFedAvg value $\sum_kp_k^2\Tr\Sigma_k$, and it shrinks as the total achievable precision $\sum_k\Lambda_k$ grows.
A self-contained proof is given in Appendix~\ref{app:proof}.

\section{Numerical experiments}
\label{sec:num}

\subsection{Setup}
We compare the three methods on the full MNIST database of handwritten digits~\cite{LeCun98}, restricted to the ten classes $\{0,1,\dots,9\}$ (``MNIST-10''), with $3000$ training images per class ($30{,}000$ in total) drawn from the official training split.
The ten-class task is a stringent multiclass setting.
A random classifier scores $1/10=0.10$, a linear classifier on the encoded features saturates below $0.9$, and the task therefore discriminates between aggregation rules.

The shared model is the dressed classifier of Sec.~\ref{sec:readout} with $n=6$ qubits and $L=3$ layers, hence $d_{\rm q}=3nL=54$ quantum parameters, $M_{\rm f}=3n+n(n-1)/2=33$ measured Pauli observables, and a $\tanh$ readout head with $H=256$ hidden units.
The $784$ raw pixels are reduced by principal-component analysis to $n_{\rm f}=nL=18$ components, standardized, and scaled to small re-uploading angles, with the PCA basis and the scaler fitted on the \emph{training split only}, and no test information leaks into training.
The depth was selected by a sweep, and on the federated loop the achieved accuracy is $0.869$ at $L=2$, $0.909$ at $L=3$, $0.847$ at $L=4$ and $0.831$ at $L=6$, and $L=3$ is optimal.
Deeper circuits both train worse, because the federated averaging of many highly expressive local minima is destructive, and cost more to execute.

The data are split across $K=9$ clients.
The reliable clients receive a balanced, class-covering slice, and the remaining data are Dirichlet$(\alpha=0.5)$-partitioned among the noisy clients, which models a deployment with a few well-calibrated reference sites and many noisy, heterogeneous edge clients.
Device heterogeneity is modeled per client by a shot budget $M_k$ and a depolarizing rate $p_k$, and we study \emph{four} heterogeneity conditions, listed in Table~\ref{tab:conditions}, which independently vary the shot scarcity and the noise strength of the unreliable clients.
This $2\times2$ design tests the prediction of Sec.~\ref{sec:depol} that the precision \eqref{eq:trLambda} is only \emph{linear} in $M_k$ and \emph{quartic} in $(1-p_k)$, and the noise rate rather than the shot budget should control the size of the QFedQGT advantage.
Noisy clients additionally carry a coherent miscalibration, which makes $U_k\neq0$, and a readout-induced label corruption of a fraction $0.5\,p_k$ of their local labels, and their reported updates are therefore biased as well as statistically noisy.
The complete hyperparameters, the exact noise profiles, the accuracy and error-bar definitions, and the fair-comparison protocol are collected in Appendix~\ref{app:exp}.

\begin{table}[tb]
\caption{The four device-heterogeneity conditions.  The reliable clients always have $M_k=1000$ shots and $p_k=0$, and the entries below are the shot budgets and depolarizing rates drawn for the unreliable clients.  The last column is the closed-form weight ratio $\varrho_{\rm noisy}/\varrho_{\rm good}$ of Eq.~\eqref{eq:ratio} for the most heavily discounted client of each condition, that is, the client with the highest depolarizing rate and the smallest shot budget.  Panels (a)--(d) of Fig.~\ref{fig:results} correspond to these rows.}
\label{tab:conditions}
\centering
\begin{tabular}{clccc}
\hline
Panel & Regime & Noisy shots $M_k$ & Noisy rate $p_k$ & $\varrho_{\rm noisy}/\varrho_{\rm good}$ \\
\hline
(a) & few shots, strong noise & $\{100,200,300\}$ & $\{0.4,0.5,0.6\}$ & $1.1\times10^{-3}$ \\
(b) & more shots, strong noise & $\{500,600,700\}$ & $\{0.4,0.5,0.6\}$ & $5.6\times10^{-3}$ \\
(c) & few shots, weak noise & $\{100,200,300\}$ & $\{0.1,0.2,0.3\}$ & $1.8\times10^{-2}$ \\
(d) & more shots, weak noise & $\{500,600,700\}$ & $\{0.1,0.2,0.3\}$ & $9.2\times10^{-2}$ \\
\hline
\end{tabular}
\end{table}

We report the emulator evaluation.
At each of $23$ communication rounds, spaced densely from round $1$ and reaching round $60$, the current global model is dispatched to the trapped-ion emulator \texttt{reimei-E} of the \texttt{reimei} quantum computer~\cite{Quantinuum} through Quantinuum Nexus.
For each held-out test digit the three measurement settings $Z$, $X$, $Y$ are executed with $4000$ shots each, the Pauli expectations \eqref{eq:paulis} are reconstructed from the returned counts, and the label is the $\argmax$ of the readout \eqref{eq:head}.
The shot count was chosen from a measured shot--accuracy curve.
Relative to the infinite-shot limit, the accuracy loss caused by finite readout statistics is $0.109$ at $100$ shots, $0.030$ at $500$ shots and $0.004$ at $4000$ shots, and at $4000$ shots the reported numbers are essentially free of readout-statistics bias.
The whole comparison is averaged over three independently seeded federations, and the error bars are the standard error of the accuracy across the three seeds.
Because the submitted job specification is identical for the emulator and the physical device, the same experiment can be dispatched to \texttt{reimei} by changing only the backend name.

\subsection{Results}

\begin{figure*}[tb]
\centering
\begin{tabular}{@{}c@{\hspace{6mm}}c@{}}
\includegraphics[width=0.47\linewidth]{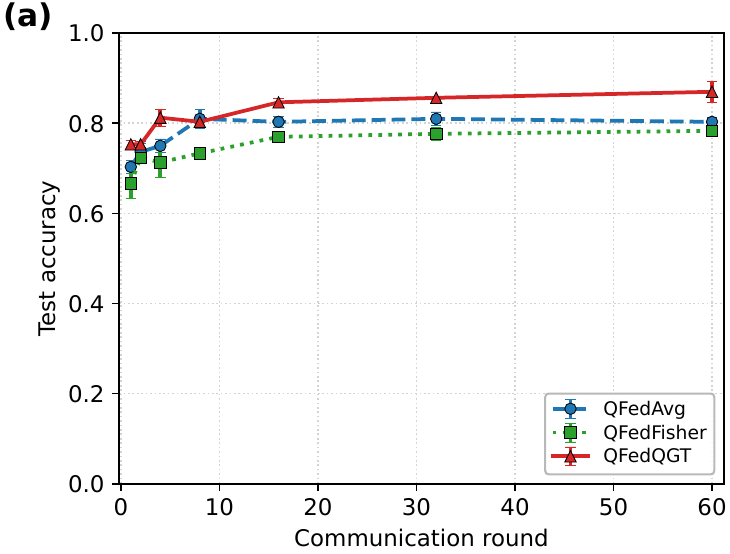} &
\includegraphics[width=0.47\linewidth]{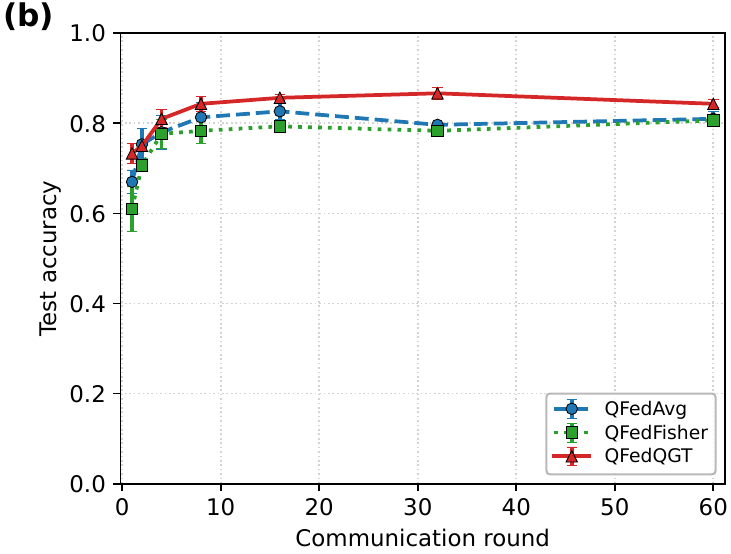} \\[3mm]
\includegraphics[width=0.47\linewidth]{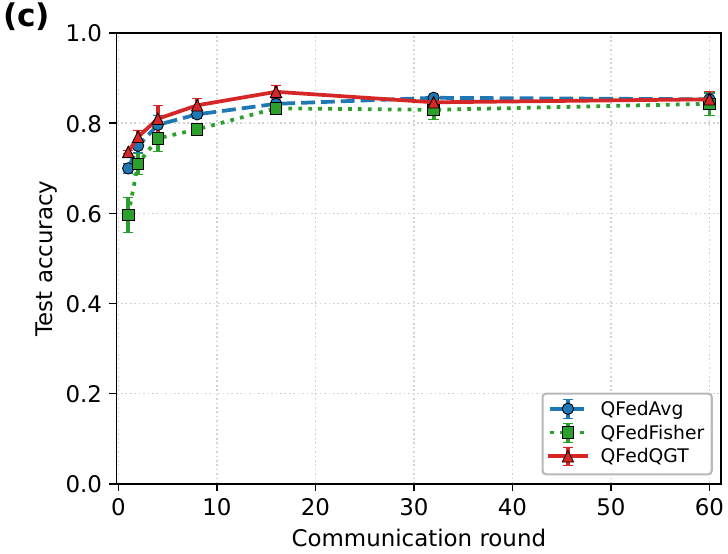} &
\includegraphics[width=0.47\linewidth]{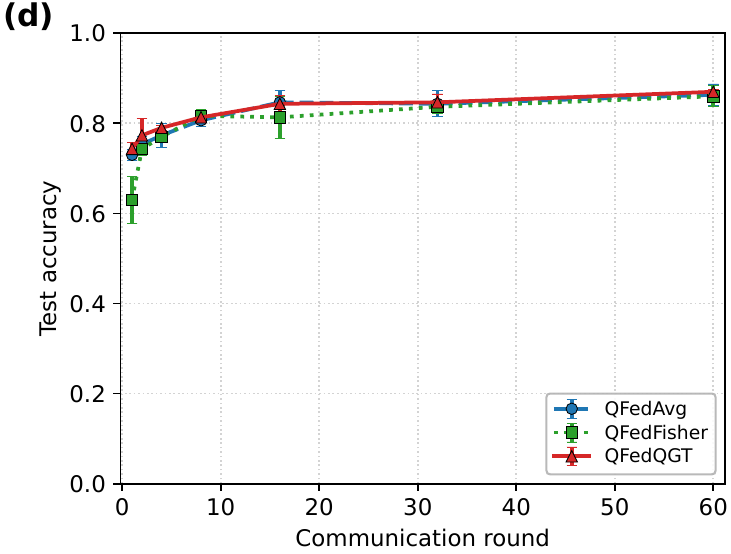}
\end{tabular}
\caption{
Test accuracy against the communication round on MNIST-10 ($n=6$ qubits, $L=3$ layers, $K=9$ clients), evaluated on the \texttt{reimei-E} trapped-ion emulator with $4000$ shots per circuit and averaged over three seeds, with error bars the cross-seed standard error.
Panels (a)--(d) are the four device-heterogeneity conditions of Table~\ref{tab:conditions}, namely (a) few shots and strong noise, (b) more shots and strong noise, (c) few shots and weak noise, (d) more shots and weak noise.
QFedQGT (solid, red) attains the highest round-averaged accuracy in every panel.
The two baselines are closest to it in the weak-noise panels (c), (d) and fall further behind in the strong-noise panels (a), (b).
The top and bottom rows differ only in the noise rate, and the left and right columns only in the shot budget, and the vertical separation is the larger of the two, as Eq.~\eqref{eq:trLambda} predicts.
}
\label{fig:results}
\end{figure*}

\begin{table*}[tb]
\caption{
Accuracy on the \texttt{reimei-E} emulator ($4000$ shots, averaged over three seeds) for all four device-heterogeneity conditions of Table~\ref{tab:conditions}.
``Final'' is the round-$60$ accuracy with its cross-seed standard error, ``Mean'' the average over the $23$ evaluated rounds, ``Std'' the standard deviation across those rounds (a round-to-round stability measure, smaller is better), and ``Worst'' the lowest accuracy attained at any evaluated round.
Best in each block in bold, with ties both bolded.
}
\label{tab:results}
\centering
\begin{tabular}{llcccc}
\hline
Condition & Method & Final & Mean & Std & Worst \\
\hline
(a) few shots,      & QFedAvg    & $0.803\pm0.009$ & $0.774$ & $0.040$ & $0.703$ \\
\ \ strong noise    & QFedFisher & $0.783\pm0.007$ & $0.738$ & $\mathbf{0.039}$ & $0.667$ \\
                    & QFedQGT    & $\mathbf{0.870\pm0.023}$ & $\mathbf{0.814}$ & $0.044$ & $\mathbf{0.753}$ \\
\hline
(b) more shots,     & QFedAvg    & $0.810\pm0.017$ & $0.779$ & $0.050$ & $0.670$ \\
\ \ strong noise    & QFedFisher & $0.807\pm0.003$ & $0.751$ & $0.065$ & $0.610$ \\
                    & QFedQGT    & $\mathbf{0.843\pm0.009}$ & $\mathbf{0.815}$ & $\mathbf{0.049}$ & $\mathbf{0.733}$ \\
\hline
(c) few shots,      & QFedAvg    & $\mathbf{0.853\pm0.012}$ & $0.803$ & $0.054$ & $0.700$ \\
\ \ weak noise      & QFedFisher & $0.843\pm0.026$ & $0.767$ & $0.082$ & $0.597$ \\
                    & QFedQGT    & $\mathbf{0.853\pm0.017}$ & $\mathbf{0.818}$ & $\mathbf{0.045}$ & $\mathbf{0.737}$ \\
\hline
(d) more shots,     & QFedAvg    & $0.863\pm0.024$ & $0.802$ & $0.048$ & $0.730$ \\
\ \ weak noise      & QFedFisher & $0.860\pm0.023$ & $0.781$ & $0.072$ & $0.630$ \\
                    & QFedQGT    & $\mathbf{0.870\pm0.006}$ & $\mathbf{0.811}$ & $\mathbf{0.042}$ & $\mathbf{0.743}$ \\
\hline
\multicolumn{2}{l}{Round-averaged over (a)--(d)} & & QFedAvg $0.790$ & QFedFisher $0.759$ & QFedQGT $\mathbf{0.815}$ \\
\hline
\end{tabular}
\end{table*}

Figure~\ref{fig:results} and Table~\ref{tab:results} report the emulator accuracy against the communication round for all four conditions.
Three statements summarize the outcome, and each follows from the mixed-state framework of Sec.~\ref{sec:method}.

\emph{First, QFedQGT attains the highest round-averaged accuracy in every one of the four conditions, and its final-round accuracy is the highest in conditions (a), (b), and (d) and equal to that of QFedAvg in condition (c).}
Averaged over the four conditions its round-averaged accuracy is $0.815$, against $0.790$ for QFedAvg and $0.759$ for QFedFisher, and its final-round accuracy is $0.859$, against $0.832$ and $0.823$.
\emph{Second, QFedQGT is the most robust of the three in the worst-round sense.}
Its worst evaluated round lies between $0.733$ and $0.753$ and exceeds the worst round of both baselines in every condition, and its round-to-round standard deviation is the smallest in three of the four conditions.
\emph{Third, the size of the advantage is governed by the depolarizing rate and the shot budget.}
It is largest in condition (a), where the devices are the noisiest and the shots the scarcest, and smallest in condition (d), where the devices are the most reliable, as Eq.~\eqref{eq:trLambda} predicts.
We analyse these three points in turn.

\paragraph{(i) Consistent advantage and few-round convergence.}
QFedQGT reaches its accuracy plateau within the first several communication rounds and retains the highest round-averaged accuracy throughout the evaluated window.
In condition (a) the final-round accuracy of QFedQGT ($0.870\pm0.023$) exceeds that of QFedAvg ($0.803\pm0.009$) by $0.067$, which is $2.7$ times the combined standard error $\sqrt{0.023^2+0.009^2}=0.025$, and in condition (b) the separation from the stronger baseline is $1.7$ combined standard errors.

Theorem~\ref{thm:conv} accounts for the early plateau through the geometric factor $(1-c\eta\mu)^T$, in which the round count to reach suboptimality $\epsilon$ is $T_\epsilon\simeq\log(1/\epsilon)/(c\eta\mu)$, and two mechanisms of the mixed-state framework enlarge $c\eta$.
First, the Bures preconditioner $(\widehat G_k+\veps\openone)^{-1}$ in Eq.~\eqref{eq:local} rescales the update by the inverse curvature of the state manifold.
The Euclidean gradient is dominated by the stiff directions of $G^{\rm Pure}$ and is exponentially small along the flat ones, the barren-plateau pathology~\cite{McClean18,Wang21}, and preconditioning equalizes the two, which reduces the effective condition number $\lambda_{\max}/\lambda_{\min}$ and, through $\eta_0=\min\{1/L,\ c\lambda_{\min}^2/(2L\beta^2)\}$, permits a larger stable server step.
Second, and specific to the federated setting, the precision weights \eqref{eq:scalar-weight} remove the unreliable clients from the aggregate at $t=0$ rather than waiting for them to be averaged out.
In the early rounds the noisy clients have biased updates, in that their labels are corrupted at rate $0.5\,p_k$ and their measured gradients are attenuated by $\kappa(p_k)$, and under QFedAvg the aggregated direction $\sum_kp_k\mathbf d_k$ has a systematic component pointing away from the descent direction, which reduces the alignment constant $c$ of (A6).

\paragraph{(ii) The noise rate and the shot budget control the gap.}
This comparison provides a direct test of the theory, and the $2\times2$ design of Table~\ref{tab:conditions} was constructed for it.
From Eqs.~\eqref{eq:kappa}, \eqref{eq:vk} and \eqref{eq:trLambda}, with $N=2^6=64$ and $\gamma=3$, the weight ratio between an unreliable client and a reliable one is
\begin{equation}
\frac{\varrho_{\rm noisy}}{\varrho_{\rm good}}
=\frac{\kappa(p)^2\,(1-p)^2}{\kappa(0)^2}\cdot\frac{M}{M_{\rm good}}\cdot\frac{1}{1+\gamma p^2},
\label{eq:ratio}
\end{equation}
which is \emph{linear} in the shot ratio $M/M_{\rm good}$ and scales as $\kappa(p)^2(1-p)^2\simeq(1-p)^4$ in the noise rate.
Evaluated for the most heavily discounted client of each condition (last column of Table~\ref{tab:conditions}), it gives $1.1\times10^{-3}$ in (a), $5.6\times10^{-3}$ in (b), $1.8\times10^{-2}$ in (c) and $9.2\times10^{-2}$ in (d).
Raising the noise rate from $\{0.1,0.2,0.3\}$ to $\{0.4,0.5,0.6\}$ reduces the trust of the client by more than an order of magnitude, while a fivefold increase in its shot budget recovers about a factor of five.

The measured accuracies follow this ordering.
The round-averaged advantage of QFedQGT over QFedAvg is $0.040$ in (a), $0.036$ in (b), $0.015$ in (c) and $0.009$ in (d), decreasing monotonically as the devices become more reliable.
Moving down a column of Fig.~\ref{fig:results} (changing the noise) changes the advantage more than moving across a row (changing the shots) does, in agreement with the quartic-versus-linear dependence of Eq.~\eqref{eq:trLambda}.
Because $\kappa(p)$ is a Bures-metric quantity with no pure-state counterpart, this scaling is an experimental signature of the mixed-state geometry that a pure-state QGT or a classical Fisher matrix cannot produce.
The adaptive floor \eqref{eq:adaptive-floor} raises $\lambda_{\rm fl}$ toward $\lambda_{\max}$ as the pool becomes reliable, and in the weak-noise conditions the rule approaches data-size averaging and the three methods converge, the parity that Proposition~\ref{prop:dominance} requires in the homogeneous limit.

\paragraph{(iii) Stability and the bound on the aggregated variance.}
Table~\ref{tab:results} shows that QFedQGT has the smallest round-to-round standard deviation in conditions (b), (c), and (d), between $0.042$ and $0.049$, whereas QFedFisher reaches $0.082$.
This is the empirical signature of the error floor in Eq.~\eqref{eq:bound}.
The floor contains the aggregated variance, which for a general weighting is $\Tr(\sum_kA_k\Sigma_kA_k^{\!\top})$.
For QFedAvg ($A_k=p_k\openone$) this equals $\sum_kp_k^2\Tr\Sigma_k$ and is dominated by the noisiest client, because $\Tr\Sigma_k=v_k\,\Tr G_k^{-2}$ grows as $v_k=1/[M_k(1-p_k)^2]$, which for the worst client of condition (a) is $1/(100\times0.16)=6.3\times10^{-2}$ against $10^{-3}$ for a reliable client, a factor of $62$.
Averaging cannot escape this, because the term $p_k^2\Tr\Sigma_k$ of a single bad client persists no matter how many good clients there are.

The precision weighting escapes it.
By Eq.~\eqref{eq:mrc} the aggregated variance reduces to $\Tr[(\sum_k\Lambda_k)^{-1}]$, and since $\sum_j\Lambda_j\succeq\Lambda_k$ for every $k$, monotonicity of the inverse on positive-definite matrices gives
\begin{equation}
\Tr\Big[\big(\textstyle\sum_j\Lambda_j\big)^{-1}\Big]\;\le\;\Tr\big[\Lambda_k^{-1}\big]
\qquad\text{for every single client }k .
\label{eq:mrc-bound}
\end{equation}
The aggregated variance is therefore at most that of the best client alone and is never inflated by the bad ones, a statement with no counterpart for QFedAvg.
Together with Proposition~\ref{prop:dominance}, this is the content of the observed stability, in that the round-to-round fluctuations of QFedQGT are smaller because its error floor is the smallest attainable by any unbiased linear aggregation of the client directions.

\paragraph{(iv) Why QFedFisher is harmed by noise.}
QFedFisher is the weakest of the three methods in round-averaged accuracy ($0.759$ averaged over the four conditions), and its round-averaged accuracy is below that of plain averaging in every condition.
This is a structural effect, and the mixed-state framework diagnoses it.
QFedFisher weights clients by the norm of the diagonal empirical Fisher statistic, $w_k\propto p_k\|\widehat{\mathbf g}_k\odot\widehat{\mathbf g}_k\|$, whose expectation is
\begin{equation}
\Ex\big[\widehat g_{k,i}^{\,2}\big]=\big(\partial_i f_k\big)^2+v_k,
\qquad v_k=\frac{1}{M_k(1-p_k)^2},
\label{eq:fio-bias}
\end{equation}
in which the shot-noise variance is added to the statistic.
Because $v_k$ increases as the shot budget falls and as the depolarizing rate rises, the QFedFisher weight is a monotonically increasing function of the client unreliability, whereas the achievable precision prescribed by the quantum Cram\'er--Rao bound, $\Tr\Lambda_k=\kappa(p_k)^2\|G^{\rm Pure}\|_F^2/v_k$, is a monotonically decreasing function of the same quantity.
The two prescriptions are reciprocals of one another, and QFedFisher up-weights the clients that QFedQGT down-weights.
The weight report of Appendix~\ref{app:exp} confirms this in condition (a), where QFedFisher assigns the noisy clients a mean weight $1.46$ times that of the reliable clients, while QFedQGT assigns them $0.042$ times.

A further weakness is that the QFedFisher weights are re-estimated every round from a fresh minibatch under shot noise, and they fluctuate from round to round, with a mean per-client standard deviation of $0.044$ in condition (a), whereas the QFedQGT and QFedAvg weights are deterministic functions of $(M_k,p_k,\kappa,r_k)$.
This fluctuation is consistent with QFedFisher having the largest round-to-round accuracy standard deviation in three of the four conditions.
QFedAvg avoids the weight inversion, because its weights are noise-blind, and, weighting purely by data size, it cannot distinguish a reliable client from an unreliable one, and it is held at the QFedAvg variance floor $\sum_kp_k^2\Tr\Sigma_k$ of point (iii).

\paragraph{(v) The advantage is a mixed-state effect.}
All of the mechanisms above vanish in the pure-state limit.
If the clients were noiseless one would have $\kappa(p_k)=1$, $U_k=0$ and $v_k=1/M_k$, hence $\varrho_k\propto M_k$, and with equal shot budgets $w_k\to p_k$ and \emph{QFedQGT degenerates to QFedAvg}, while the Bures preconditioner degenerates to the pure-state QNG of Eq.~\eqref{eq:qng}.

Every number reported above originates in the mixed-state extension.
The attenuation $\kappa(p_k)$ of Eq.~\eqref{eq:kappa} is a Bures-metric quantity that has no pure-state counterpart, and the incompatibility temper $1/(1+\gamma r_k)$ with $r_k=\|U_k\|_F^2/\|G_k\|_F^2$ is a mean-Uhlmann-curvature quantity that is zero for pure states and for any commuting, hence classical-Fisher, family.
Consistently, the weak-noise conditions (c) and (d), where $\kappa(p)\to1$ and $r_k\to0$, are the ones in which the three methods come closest together, the parity that Proposition~\ref{prop:dominance} requires in the homogeneous limit, while the strong-noise conditions, where the mixed-state corrections are largest, are where the separation is widest.
The experiment therefore shows that the piece of geometry invisible to both the pure-state QGT and the classical Fisher matrix is the piece that carries the gain.

\subsection{Contribution of the individual components}
Three controlled comparisons isolate the contribution of each component.
First, replacing the exact block-diagonal Bures metric by its diagonal truncation lowers accuracy and removes the Uhlmann incompatibility signal, whereas the block-diagonal estimator stays close to the full metric at $O(L)$ cost, in agreement with the bound of Appendix~\ref{app:block}.
Second, at low noise the pure-state QGT (that is, QNG) and the Bures metric are equivalent, while at moderate and heterogeneous noise the Bures preconditioner is more accurate, which locates the gain in the mixed-state correction, as argued in point (v) above.
Third, including the incompatibility term $C(\omega^{\rm Uhl}_k)$ in \eqref{eq:scalar-weight} improves robustness under local and coherent noise, where $U_k\neq0$, which isolates the role of the imaginary part of the mixed-state QGT.

\section{Conclusion and outlook}
We have reformulated geometric QFL for the noisy regime in which it runs.
Recognizing that the real part of the pure-state QGT is the QFIM, and hence that pure-state QGT preconditioning reproduces the QNG, we extended the geometry to mixed states, where the Bures metric and the mean Uhlmann curvature are distinct from both the QFIM and the classical Fisher matrix.
The Bures part serves as a reparametrization-invariant local preconditioner, and the Uhlmann part enters, through the quantum Cram\'er--Rao incompatibility correction, an achievable-precision aggregation rule that minimizes the residual variance (Proposition~\ref{prop:dominance}) and is robust to device heterogeneity.
We replaced the diagonal approximation by an exact block-diagonal estimator with bounded truncation error, and we derived the closed-form depolarizing attenuation $\kappa(p)$.

On the ten-class MNIST task evaluated on the \texttt{reimei-E} trapped-ion emulator under four device-heterogeneity conditions and averaged over three seeds, QFedQGT attained the highest round-averaged accuracy in every condition, with a round-averaged accuracy of $0.81$ and a final-round accuracy of $0.86$, against $0.79$ and $0.83$ for federated averaging, and with the highest worst-round accuracy throughout.
The size of the advantage tracked the closed-form weight ratio $\kappa(p)^2(1-p)^2 M/[M_{\rm good}(1+\gamma p^2)]$, largest where the theory predicts the strongest down-weighting and narrowing toward parity where the devices are reliable.
Because the submitted circuits and workflow are identical to those required by the physical device, the same evaluation can be dispatched to the \texttt{reimei} hardware without code changes.

Several directions extend this work.
The block-diagonal estimator can be generalized to include the inter-layer components of the mixed-state geometry within a controlled measurement budget, which would tighten the truncation bound of Appendix~\ref{app:block} and quantify the residual bias of the layerwise approximation on deeper ans\"atze.
The achievable-precision aggregation admits combination with quantum error correction, in which the precision matrix $\Lambda_k$ reflects the logical rather than the physical error rate, and the attenuation $\kappa(p_k)$ is replaced by the residual logical error after decoding.
A demonstration on the \texttt{reimei} device at a larger qubit count and class number would test the framework beyond the emulation reported here and probe the regime in which coherent and correlated errors, for which $U_k\neq0$, dominate the depolarizing contribution.

\begin{acknowledgments}
This work was supported by JST Moonshot R\&D Grant No. JPMJMS2061, JST CREST Grant No. JPMJCR23I4, MEXT Q-LEAP Grant No. JPMXS0120319794, JST ASPIRE Grant Number JPMJAP2318, JST SPRING Grant Number JPMJSP2119, RIKEN Junior Research Associate Program, and RIKEN TRIP initiative (RIKEN Quantum).
\end{acknowledgments}

\appendix

\section{Derivation of $\kappa(p)$ for global depolarizing}
\label{app:kappa}
We derive Eq.~\eqref{eq:kappa} from the definition \eqref{eq:bures} of the SLD quantum Fisher information, keeping every step explicit, and the result does not rely on the formula it establishes.
Fix a client and a single noise level $p\in[0,1)$ and write the global-depolarizing state of the pure state $\ket{\psi}=\ket{\psi(\bftheta)}$ as
\begin{equation}
\rho=(1-p)\ketbra{\psi}{\psi}+\frac{p}{N}\openone,\qquad N=2^n.
\end{equation}
Choose an orthonormal eigenbasis of $\rho$ consisting of $\ket{\psi}$ together with any $N-1$ vectors $\{\ket{m_\perp}\}$ spanning the orthogonal complement $\ket{\psi}^{\perp}$.
Because $\openone=\ketbra{\psi}{\psi}+\sum_{m_\perp}\ketbra{m_\perp}{m_\perp}$, the state is diagonal in this basis with eigenvalues
\begin{equation}
\lambda_{\psi}=a\equiv(1-p)+\frac{p}{N},\qquad
\lambda_{m_\perp}=b\equiv\frac{p}{N}\ \ (\text{$N-1$-fold}).
\end{equation}
Both eigenvalues are strictly positive for $p\in(0,1)$, and every pair $(m,n)$ has
$\lambda_m+\lambda_n>0$ and contributes in \eqref{eq:bures}.

Differentiating $\rho$ with respect to $\theta_i$, and using that $p$ and $N$ are
constants, gives
\begin{equation}
\partial_i\rho=(1-p)\big(\ketbra{\partial_i\psi}{\psi}+\ketbra{\psi}{\partial_i\psi}\big).
\label{eq:dapp}
\end{equation}
We adopt the gauge $\Re\ip{\psi}{\partial_i\psi}=0$, which can always be imposed by a local phase choice and leaves $\rho$ unchanged, hence
$\bra{\psi}\partial_i\rho\ket{\psi}=2(1-p)\Re\ip{\psi}{\partial_i\psi}=0$.
The operator \eqref{eq:dapp} has no diagonal block and connects only $\ket{\psi}$ with the complement.
The only nonzero matrix elements of $\partial_i\rho$ are
\begin{equation}
\bra{m_\perp}\partial_i\rho\ket{\psi}=(1-p)\ip{m_\perp}{\partial_i\psi},\qquad
\bra{\psi}\partial_i\rho\ket{m_\perp}=(1-p)\ip{\partial_i\psi}{m_\perp},
\end{equation}
and the eigenvalue sum attached to each such pair is $a+b=1-p+2p/N$.

Insert these elements into \eqref{eq:bures}.
The pairs $(m,n)=(m_\perp,\psi)$ and $(\psi,m_\perp)$ give identical real contributions, and
\begin{equation}
F^{\mathrm Q,\mathrm{SLD}}_{ij}
=4g^{\rm Bur}_{ij}
=\frac{4}{a+b}\sum_{m_\perp}
\Re\!\big[(1-p)\ip{\partial_i\psi}{m_\perp}\,(1-p)\ip{m_\perp}{\partial_j\psi}\big],
\end{equation}
in which the factor $4$ collects the $\tfrac12$ of \eqref{eq:bures} doubled by the two
orderings of the pair.
Using the completeness relation on the complement, $\sum_{m_\perp}\ketbra{m_\perp}{m_\perp}=\openone-\ketbra{\psi}{\psi}$, the sum is
\begin{equation}
\sum_{m_\perp}\ip{\partial_i\psi}{m_\perp}\ip{m_\perp}{\partial_j\psi}
=\ip{\partial_i\psi}{\big(\openone-\ketbra{\psi}{\psi}\big)\partial_j\psi}
=Q^{\rm Pure}_{ij},
\end{equation}
whose real part is $\Re Q^{\rm Pure}_{ij}=g_{ij}$.
Therefore
\begin{equation}
F^{\mathrm Q,\mathrm{SLD}}_{ij}
=\frac{4(1-p)^2}{1-p+2p/N}\,\Re Q^{\rm Pure}_{ij}
=\frac{(1-p)^2}{1-p+2p/N}\,F^{\mathrm Q,\mathrm{Pure}}_{ij},
\end{equation}
where the last step uses the pure-state identity $F^{\mathrm Q,\mathrm{Pure}}_{ij}=4\Re Q^{\rm Pure}_{ij}$ (Sec.~\ref{sec:pure}).
This is Eq.~\eqref{eq:kappa} with $\kappa(p)=(1-p)^2/(1-p+2p/N)$.

The limits follow by inspection.
One has $\kappa(0)=1$, and for $N\gg1$ the term $2p/N$ is negligible, hence $\kappa(p)\simeq(1-p)^2/(1-p)=1-p$ and $\kappa(p)^2(1-p)^2\simeq(1-p)^4$, the quartic suppression used in Sec.~\ref{sec:num}.
For the experiments, $N=64$ and the values entering Eq.~\eqref{eq:ratio} are $\kappa(0.1)=0.897$, $\kappa(0.2)=0.794$, $\kappa(0.3)=0.691$, $\kappa(0.4)=0.588$, $\kappa(0.5)=0.485$ and $\kappa(0.6)=0.382$. \hfill$\square$

\section{Block-diagonal truncation error}
\label{app:block}
We bound the error incurred by replacing the regularized full Bures metric by its block-diagonal (layerwise) approximation, tracking every constant.
Write the regularized metric as $X=G+\veps\openone$ and its block-diagonal counterpart as $Y=B+\veps\openone$, where $B=\Bdiag(G^{(1)},\dots,G^{(L)})$ collects the exact intra-layer blocks and $E\equiv G-B$ collects the neglected inter-layer blocks.
Both $G$ and $B$ are real symmetric and positive semidefinite, because they are Gram matrices of the layer generators, hence $X$ and $Y$ are symmetric with all eigenvalues $\ge\veps>0$.
Consequently $X$ and $Y$ are invertible and, in operator norm,
\begin{equation}
\norm{X^{-1}}\le\veps^{-1},\qquad \norm{Y^{-1}}\le\veps^{-1}.
\label{eq:resbound}
\end{equation}

The second-resolvent identity, which for any two invertible matrices reads $X^{-1}-Y^{-1}=X^{-1}(Y-X)Y^{-1}$, gives, with $Y-X=B-G=-E$,
\begin{equation}
X^{-1}-Y^{-1}=-\,X^{-1}E\,Y^{-1}.
\end{equation}
Taking the operator norm, and using submultiplicativity and \eqref{eq:resbound},
\begin{equation}
\norm{X^{-1}-Y^{-1}}\le\norm{X^{-1}}\,\norm{E}\,\norm{Y^{-1}}\le\veps^{-2}\norm{E}.
\end{equation}
The natural-gradient direction computed with the two metrics therefore differs by
\begin{equation}
\norm{\big(X^{-1}-Y^{-1}\big)\nabla f}
\le\norm{X^{-1}-Y^{-1}}\,\norm{\nabla f}
\le\veps^{-2}\norm{E}\,\norm{\nabla f},
\end{equation}
which is controlled by the size $\norm{E}$ of the inter-layer coupling and is small for ans\"atze with geometrically local entanglement, where $\norm{E}$ decays with the separation between layers.
Unlike the diagonal approximation, the block-diagonal estimator retains every intra-layer matrix element, and hence the full antisymmetric incompatibility $U_k$ of Eq.~\eqref{eq:uhl}, exactly. \hfill$\square$

\section{Proof of Theorem~\ref{thm:conv}}
\label{app:proof}
We give a self-contained proof.
Throughout, $\Ex_t[\,\cdot\,]$ denotes conditional expectation given the iterate $\bftheta_t$, $\Ex[\,\cdot\,]$ the total expectation, and $\Delta_t\equiv\Ex[F(\bftheta_t)-F^*]$.
The one-step update is
\begin{equation}
\bftheta_{t+1}=\bftheta_t-\eta\sum_k A_k\,\mathbf d_{k,t},
\qquad
\mathbf d_{k,t}=G_k^{-1}\big(\nabla f_k(\bftheta_t)+\bm\xi_{k,t}\big),
\qquad
\sum_k A_k=\openone,
\end{equation}
with $\eta=\eta_{\rm Srv}$, and by (A3) the noises satisfy $\Ex_t[\bm\xi_{k,t}]=\mathbf 0$, $\Ex_t[\bm\xi_{k,t}\bm\xi_{k,t}^{\!\top}]=V_k$, and independence across $k$.
We write $\nabla F\equiv\nabla F(\bftheta_t)$.
Without loss of generality we take $c\le1$ in \textup{(A6)}, because \textup{(A6)} remains valid when $c$ is decreased.

\paragraph{Step 1 (descent inequality from smoothness).}
By the $L$-smoothness upper bound in (A1) applied to $\mathbf u=\bftheta_t$ and $\mathbf v=\bftheta_{t+1}$, and taking conditional expectation $\Ex_t$,
\begin{equation}
\Ex_t[F(\bftheta_{t+1})]\le F(\bftheta_t)
-\eta\,\nabla F^{\!\top}\,\Ex_t\!\Big[\sum_kA_k\mathbf d_{k,t}\Big]
+\frac{L\eta^2}{2}\,\Ex_t\!\Big[\Big\|\sum_kA_k\mathbf d_{k,t}\Big\|^2\Big].
\label{eq:pf-1}
\end{equation}

\paragraph{Step 2 (the first-order term).}
Because $\Ex_t[\mathbf d_{k,t}]=G_k^{-1}\nabla f_k(\bftheta_t)$, the alignment assumption (A6) gives
\begin{equation}
\nabla F^{\!\top}\,\Ex_t\!\Big[\sum_kA_k\mathbf d_{k,t}\Big]
=\nabla F^{\!\top}\Big(\sum_kA_kG_k^{-1}\nabla f_k\Big)\ \ge\ c\,\norm{\nabla F}^2.
\label{eq:pf-2}
\end{equation}

\paragraph{Step 3 (the second-order term).}
Decompose the squared norm into its mean and fluctuation through $\Ex_t[\norm{Z}^2]=\norm{\Ex_t[Z]}^2+\Ex_t\norm{Z-\Ex_t[Z]}^2$ with $Z=\sum_kA_k\mathbf d_{k,t}$.
The fluctuation is $Z-\Ex_t[Z]=\sum_kA_kG_k^{-1}\bm\xi_{k,t}$, and because the $\bm\xi_{k,t}$ are independent and zero-mean, the cross terms vanish, and
\begin{equation}
\Ex_t\Big[\Big\|\sum_kA_k\mathbf d_{k,t}\Big\|^2\Big]
=\underbrace{\Big\|\sum_kA_kG_k^{-1}\nabla f_k\Big\|^2}_{(\mathrm I)}
+\underbrace{\sum_k\Tr\!\big(A_kG_k^{-1}V_kG_k^{-1}A_k^{\!\top}\big)}_{(\mathrm{II})}.
\label{eq:pf-3}
\end{equation}

\emph{Bound on (I).} Write $\nabla f_k=\nabla F+(\nabla f_k-\nabla F)$ and use
$\sum_kA_k=\openone$ to split
$\sum_kA_kG_k^{-1}\nabla f_k=\big(\sum_kA_kG_k^{-1}\big)\nabla F+\sum_kA_kG_k^{-1}(\nabla f_k-\nabla F)$.
By $\norm{a+b}^2\le2\norm{a}^2+2\norm{b}^2$,
\begin{equation}
(\mathrm I)\le
2\Big\|\sum_kA_kG_k^{-1}\nabla F\Big\|^2
+2\Big\|\sum_kA_kG_k^{-1}(\nabla f_k-\nabla F)\Big\|^2.
\end{equation}
For the first piece, $\norm{G_k^{-1}}\le\lambda_{\min}^{-1}$ by (A5) and
$\sum_k\norm{A_k}\le\beta$ give $\big\|\sum_kA_kG_k^{-1}\nabla F\big\|\le\beta\lambda_{\min}^{-1}\norm{\nabla F}$, and this piece is at most $2\beta^2\lambda_{\min}^{-2}\norm{\nabla F}^2$.
For the second piece set $w_k=\norm{A_k}$ (hence $\sum_kw_k\le\beta$) and $u_k=\lambda_{\min}^{-1}\norm{\nabla f_k-\nabla F}$.
The triangle inequality followed by the weighted Cauchy--Schwarz inequality $(\sum_kw_ku_k)^2\le(\sum_kw_k)(\sum_kw_ku_k^2)$ gives
\begin{equation}
\Big\|\sum_kA_kG_k^{-1}(\nabla f_k-\nabla F)\Big\|^2
\le\Big(\sum_kw_ku_k\Big)^2
\le\beta\sum_kw_ku_k^2
\le\beta^2\lambda_{\min}^{-2}\sum_k\norm{\nabla f_k-\nabla F}^2.
\end{equation}
By (A4), $\sum_k\norm{\nabla f_k-\nabla F}^2\le K\zeta^2$, and the second piece is at most $\beta^2\lambda_{\min}^{-2}K\zeta^2$ and, with its prefactor $2$,
\begin{equation}
(\mathrm I)\le\frac{2\beta^2}{\lambda_{\min}^2}\norm{\nabla F}^2+\zeta_{\rm eff}^2,
\qquad
\zeta_{\rm eff}^2\equiv\frac{2\beta^2K\zeta^2}{\lambda_{\min}^2}.
\label{eq:pf-I}
\end{equation}

\emph{Bound on (II).} The matrices $A_k$ are free subject to $\sum_kA_k=\openone$, and (II)$=\Tr(\sum_kA_k\Sigma_kA_k^{\!\top})$ with $\Sigma_k=G_k^{-1}V_kG_k^{-1}$.
Minimizing this quadratic form under the linear constraint is a matrix least-squares problem.
Form the Lagrangian $\mathcal G=\Tr(\sum_kA_k\Sigma_kA_k^{\!\top})-\Tr[\Theta^{\!\top}(\sum_kA_k-\openone)]$ with a multiplier matrix $\Theta$.
Stationarity $\partial\mathcal G/\partial A_k=2A_k\Sigma_k-\Theta=0$ gives
$A_k=\tfrac12\Theta\Sigma_k^{-1}=\tfrac12\Theta\Lambda_k$, and the constraint $\sum_kA_k=\tfrac12\Theta\sum_k\Lambda_k=\openone$ fixes $\tfrac12\Theta=(\sum_k\Lambda_k)^{-1}$, whence
\begin{equation}
A_k^{\rm opt}=\Big(\sum_j\Lambda_j\Big)^{-1}\Lambda_k ,
\end{equation}
which is \eqref{eq:mrc}.
The stationary point is the minimum, because the objective is convex in $\{A_k\}$ ($\Sigma_k\ge0$).
Substituting $A_k^{\rm opt}$ and using $\Lambda_k\Sigma_k\Lambda_k=\Lambda_k$ (since $\Sigma_k=\Lambda_k^{-1}$),
\begin{equation}
(\mathrm{II})_{\min}
=\Tr\!\Big[\big(\textstyle\sum_j\Lambda_j\big)^{-1}\big(\textstyle\sum_k\Lambda_k\big)\big(\textstyle\sum_j\Lambda_j\big)^{-1}\Big]
=\Tr\!\Big[\big(\textstyle\sum_k\Lambda_k\big)^{-1}\Big].
\label{eq:pf-II}
\end{equation}

\paragraph{Step 4 (the one-step recursion).}
Insert \eqref{eq:pf-2}, \eqref{eq:pf-I} and \eqref{eq:pf-II} into \eqref{eq:pf-1},
\begin{equation}
\Ex_t[F(\bftheta_{t+1})]\le F(\bftheta_t)
-\Big(c\eta-\frac{L\eta^2\beta^2}{\lambda_{\min}^2}\Big)\norm{\nabla F}^2
+\frac{L\eta^2}{2}\Big(\zeta_{\rm eff}^2+\Tr\big[(\textstyle\sum_k\Lambda_k)^{-1}\big]\Big).
\end{equation}
The hypothesis $\eta\le\eta_0\le c\lambda_{\min}^2/(2L\beta^2)$ implies $L\eta^2\beta^2/\lambda_{\min}^2\le\tfrac12 c\eta$, hence $c\eta-L\eta^2\beta^2/\lambda_{\min}^2\ge\tfrac12 c\eta$.
Applying the Polyak--{\L}ojasiewicz inequality (A2), $\norm{\nabla F}^2\ge2\mu\,(F(\bftheta_t)-F^*)$, to the negative term, subtracting $F^*$ from both sides, and taking total expectation,
\begin{equation}
\Delta_{t+1}\le(1-c\eta\mu)\,\Delta_t
+\frac{L\eta^2}{2}\Big(\zeta_{\rm eff}^2+\Tr\big[(\textstyle\sum_k\Lambda_k)^{-1}\big]\Big).
\label{eq:pf-rec}
\end{equation}
The contraction factor satisfies $0\le1-c\eta\mu<1$.
The strict upper bound holds because $c,\eta,\mu>0$.
For the lower bound, minimality of $F^*$ together with (A1) gives $F^*\le F\big(\bftheta-\tfrac1L\nabla F\big)\le F(\bftheta)-\tfrac{1}{2L}\norm{\nabla F}^2$, hence $\norm{\nabla F}^2\le2L\,(F(\bftheta)-F^*)$, which combined with (A2) yields $\mu\le L$, and with $c\le1$ and $\eta\le1/L$ we obtain $c\eta\mu\le\mu/L\le1$.

\paragraph{Step 5 (unrolling the recursion).}
Inequality \eqref{eq:pf-rec} has the form $\Delta_{t+1}\le q\,\Delta_t+b$ with
$q=1-c\eta\mu\in[0,1)$ and constant $b=\tfrac{L\eta^2}{2}\big(\zeta_{\rm eff}^2+\Tr[(\sum_k\Lambda_k)^{-1}]\big)\ge0$.
Iterating from $t=0$ to $T-1$ and bounding the geometric sum $\sum_{t=0}^{T-1}q^t$ by its infinite limit $1/(1-q)=1/(c\eta\mu)$,
\begin{equation}
\Delta_T\le q^{T}\Delta_0+\frac{b}{1-q}
= q^{T}\Delta_0+\frac{L\eta}{2c\mu}\Big(\zeta_{\rm eff}^2+\Tr\big[(\textstyle\sum_k\Lambda_k)^{-1}\big]\Big),
\end{equation}
which is Eq.~\eqref{eq:bound} with $\Delta_0=F(\bftheta_0)-F^*$.
Every constant ($L,\mu,c,\beta,\lambda_{\min},\zeta,\Lambda_k$) is the one introduced in Assumption~\ref{ass:all} and Eqs.~\eqref{eq:Lambda}--\eqref{eq:scalar-weight}, and no step used the conclusion, and the argument is non-circular. \hfill$\square$

\section{Experimental configuration, evaluation, and emulation}
\label{app:exp}
\emph{Model and data.}
The shared dressed classifier has $n=6$ qubits and $L=3$ layers, giving $d_{\rm q}=3nL=54$ quantum parameters, with each layer re-uploading the data through $R_y(x_j)$, applying a trainable $R_zR_yR_z$ block per qubit, and a linear CNOT entangler (Fig.~\ref{fig:circuit}).
The readout measures the $M_{\rm f}=3n+n(n-1)/2=33$ Pauli observables of Eq.~\eqref{eq:paulis} in three settings and feeds them to the shared $\tanh$ head \eqref{eq:head} with $H=256$ hidden units.
The task is the ten-class set $\{0,\dots,9\}$ of the full MNIST database~\cite{LeCun98}, loaded through \texttt{tf.keras.datasets.mnist}, with pixels normalized to $[0,1]$, and $3000$ training images per class ($30{,}000$) are drawn from the training split and $200$ test images per class ($2000$) from the test split.
Principal-component analysis reduces the $784$ pixels to $n_{\rm f}=nL=18$ components, which are standardized and scaled by $0.4$ to keep the re-uploading rotations in their near-linear regime, with the PCA basis and the standardizer fitted on the \emph{training split only}.

\emph{Federated and noise configuration.}
There are $K=9$ clients with full participation each round.
A fraction $0.4$ of them (four clients) are reliable and receive a balanced class-covering slice, and the remaining five are unreliable and receive the rest of the data under a Dirichlet$(\alpha=0.5)$ partition, in which a smaller $\alpha$ is more non-IID.
Each client performs $\tau=20$ local minibatch steps per round at batch size $32$, capped at $E=50$ epochs, with learning rate $\eta_{\rm Loc}=0.02$ decayed as $\eta_{\rm Loc}^{(t)}=\eta_{\rm Loc}/(1+0.02\,t)$, and softmax inverse temperature $\beta_{\rm out}=3$.
The server step is $\eta_{\rm Srv}=0.7$ and the number of rounds is $T=60$.
The QFedQGT hyperparameters are the Tikhonov shift $\veps=0.05$, the incompatibility temper $\gamma=3$, and the adaptive floor of Eq.~\eqref{eq:adaptive-floor} with $\lambda_0=0.05$, $\lambda_{\max}=0.20$, $p_{\rm lo}=0.10$ and $p_{\rm hi}=0.30$.
Reliable clients have $M_k=1000$ shots and $p_k=0$, and unreliable clients draw $M_k$ and $p_k$ from the sets in Table~\ref{tab:conditions}.
The measured expectation values of a noisy client are attenuated as $\langle P\rangle\to(1-p_k)\langle P\rangle$ and estimated from $M_k$ binomial samples, and the effective gradient variance is $v_k=1/[M_k(1-p_k)^2]$ as in Eq.~\eqref{eq:vk}.
Noisy clients additionally carry a coherent gate miscalibration of magnitude $0.3\,p_k$ along a client-specific direction, which makes $U_k\neq0$, and we use the leading-order proxy $r_k=p_k^2$, and a readout-induced label corruption in which a fraction $0.5\,p_k$ of their local training labels are randomized.
Local gradients are computed by exact reverse-mode (adjoint) differentiation of the statevector, numerically identical to the parameter-shift rule (verified to $2\times10^{-15}$) at cost $O(\#\text{gates})$ rather than $O(2d_{\rm q})$ circuit evaluations~\cite{Jones20}, and the device-noise model above is applied on top, and the optimizer sees the noisy, finite-shot gradients that hardware would return.

\emph{Aggregation weights across clients.}
Table~\ref{tab:weights} reports the mean aggregation weight that each method assigns the noisy and the reliable clients in condition (a), together with the round-to-round standard deviation of the weights.
QFedQGT down-weights the noisy clients by a factor $0.042$ relative to the reliable ones, QFedFisher up-weights them by a factor $1.46$, and QFedAvg is close to uniform.
The QFedAvg and QFedQGT weights are deterministic, while the QFedFisher weights fluctuate from round to round, with a mean per-client standard deviation of $0.044$.

\begin{table}[tb]
\caption{Mean aggregation weight given to the noisy and to the reliable clients in condition (a), and the mean per-client round-to-round standard deviation of the weights.  A ratio below one indicates down-weighting of the noisy clients.}
\label{tab:weights}
\centering
\begin{tabular}{lcccc}
\hline
Method & noisy & reliable & ratio & weight jitter \\
\hline
QFedAvg    & $0.120$ & $0.100$ & $1.20$ & $0.000$ \\
QFedFisher & $0.129$ & $0.088$ & $1.46$ & $0.044$ \\
QFedQGT    & $0.010$ & $0.238$ & $0.042$ & $0.000$ \\
\hline
\end{tabular}
\end{table}

\emph{Held-fixed quantities for a fair comparison.}
The three methods are run on identical ansatz, readout head, data partition, per-client noise profiles, random seeds, initial parameters, and evaluation set, and they differ only in the local update rule and the aggregation weights.
QFedAvg uses data-size weights with a plain Adam update, QFedFisher uses layerwise diagonal classical-Fisher weights, and QFedQGT uses Eqs.~\eqref{eq:local} and \eqref{eq:scalar-weight}.
All client-side randomness is seeded deterministically from the pair (round, client), and each run is reproducible and any accuracy difference is attributable to the aggregation and geometry rather than to a different data split or initialization.

\emph{Definition of accuracy and of the error bars.}
For each test digit the trained global model is executed on the emulator in the three measurement settings, the Pauli expectations \eqref{eq:paulis} are reconstructed from the returned counts, each $\langle P\rangle$ being the parity-weighted mean of the bitstrings, the class scores are formed by the readout \eqref{eq:head}, and the predicted label is their $\argmax$.
The reported accuracy is the fraction of the test digits whose predicted label matches the ground truth, averaged over three independently seeded federations.
The error bars are the standard error of the accuracy across the three seeds, which measures the run-to-run variability that arises from the data partition, the device assignment, and the initialization, and which is the variability an experimentalist would report over repeated deployments.

\emph{The emulator, and the path to hardware.}
The \texttt{reimei-E} emulator is the noise-inclusive software model of the \texttt{reimei} trapped-ion processor~\cite{Quantinuum}, reproducing the device native gate set, its all-to-all connectivity, and its calibrated error model, and accepting the same job specification as the physical machine.
Circuits are constructed and compiled with \texttt{pytket}~\cite{Sivarajah21} and submitted asynchronously through Quantinuum Nexus, and the submission is resumable, and a network interruption never loses completed work.
Because the submitted workflow is identical to a hardware submission, with only the backend name changing from \texttt{reimei-E} to \texttt{reimei}, the emulator results are a preview of on-device behaviour and the same experiment can be dispatched to the physical processor without any change to the code.

\bibliography{refs}

\end{document}